\documentclass[12pt]{article}
\usepackage{latexsym,amsmath,amssymb,mathrsfs,graphicx,hyperref,mathpazo,soul,color,comment}
\usepackage{pgf}
\usepackage{graphicx}
\usepackage{comment}
\usepackage[english]{babel}
\usepackage{amsmath}
\usepackage{amsfonts}
\usepackage{relsize}
\usepackage{bm}
\usepackage{arydshln}
\usepackage{mathtools}

\usepackage{hhline}
\usepackage{multirow}

\usepackage{subfig}

\usepackage[round]{natbib}

\usepackage[parfill]{parskip}

\usepackage{authblk}

\usepackage{graphicx,psfrag,epsf}
\newcommand{\blind}{0}

\addtolength{\oddsidemargin}{-.5in}%
\addtolength{\evensidemargin}{-.5in}%
\addtolength{\textwidth}{1in}%
\addtolength{\textheight}{1.3in}%
\addtolength{\topmargin}{-.8in}%

\usepackage{afterpage}

\renewcommand{\ge}{\geqslant}
\renewcommand{\le}{\leqslant}

\def\argmin{\mathop{{\rm argmin}}}
\def\eop{\hfill{$\Box$}\medskip}

\newtheorem{Proposition}{Proposition}

\newcounter{problem}

\newenvironment{proof}{\smallskip\noindent\textbf{Proof.}~}{\phantom{a}\eop}

\date{}

\begin{document}

\def\spacingset#1{\renewcommand{\baselinestretch}%
{#1}\small\normalsize} \spacingset{1}


\if0\blind
{
  \title{\bf Mixture Quantiles Estimated by  Constrained Linear Regression}
  \author{Cheng Peng \\
    Department of Applied Mathematics and Statistics, Stony Brook University\\
    and \\
    Yizhou Li \\
    Department of Applied Mathematics and Statistics, Stony Brook University \\
	and \\
	Stan Uryasev \thanks{Corresponding author. E-mail: stanislav.uryasev@stonybrook.edu}\\
	Department of Applied Mathematics and Statistics, Stony Brook University    
    }
  \maketitle
} \fi

\if1\blind
{
  \bigskip
  \bigskip
  \bigskip
  \begin{center}
    {\LARGE\bf Title}
\end{center}
  \medskip
} \fi

\bigskip
\begin{abstract}
We study the problem of modeling univariate distributions via their quantile functions. We introduce a flexible family of distributions whose quantile function is a linear combination of basis quantiles. Because the model is linear in its parameters, estimation reduces to constrained linear regression, yielding a convex optimization problem that readily accommodates cardinality constraints as well as $L_1$ or smoothness regularization. For $L_q$-type objectives we show the estimator is asymptotically equivalent to a minimum $q$-Wasserstein-distance estimator and establish asymptotic normality. Experiments on simulated and real-world datasets demonstrate that the proposed method accurately captures both the central body and extreme tails of distributions while requiring substantially less computation than standard benchmark approaches.
\end{abstract}

\noindent%
{\it Keywords:}  Quantile function estimation, fat-tailed distribution, constrained linear regression, minimum Wasserstein distance estimator, convex programming
\vfill

\newpage
\spacingset{1.45} 

\section{Introduction}

Building a model for a univariate distribution involves a trade-off among interpretability, flexibility, and tractability.
This paper studies a family of distributions that is sufficiently flexible to accommodate most distributions considered in practice while remaining straightforward to interpret and estimate.

The suggested family of distributions is analogous to mixtures of probability densities constructed as weighted sums of basis density functions (equivalently, a weighted sum of basis CDFs). Instead of aggregating CDFs, we consider a family in which the quantile function (inverse CDF) is defined as a weighted sum of basis quantiles. The idea of modeling a quantile function by adding basic functions appears in \cite{tukey1962future, gilchrist2000statistical}. Various types of basis quantile functions have been considered previously: orthogonal polynomials \citep{Sillitto}; quantiles of the normal and Cauchy distributions mixed with linear and quadratic terms \citep{Karvanen}; and quantiles of modified normal distributions \citep{Keelin2}. The quantile aggregation in \cite{vincent1912vincent} and \cite{ratcliff1979group} has a different motivation: estimating a population distribution by averaging distributions obtained from individuals with equal weights.

Earlier approaches are limited in their choice of basis functions and do not generally guarantee that the resulting quantile function is nondecreasing. Our approach addresses these two issues. The type and number of basis functions in the weighted sum are arbitrary provided each basis function is a valid quantile function. We recommend using quantiles of common distributions (e.g., normal, exponential) when prior knowledge about the shape of the target distribution is available. For more complicated shapes, such as multimodal distributions, monotone I-spline bases are convenient \citep{Ramsay1988Ispline}. Moreover, \cite{Papp2011polynomial, Papp2014spline} show that piecewise I-splines with nonnegative coefficients can approximate any bounded, continuous, nondecreasing function. Both quantiles of known distributions and I-spline bases are continuous and nondecreasing; hence nondecreasingness of the estimated quantile function follows from nonnegativity of the weights (the sum of nondecreasing functions with nonnegative weights is nondecreasing). While Gaussian mixture densities are parametric, the mixture-of-splines approach to estimating the quantile is nonparametric. Nevertheless, the mixture-of-splines model admits a simple closed-form expression and offers substantial flexibility. Monte Carlo simulation is straightforward via inverse transform sampling.

To fit the model, we minimize the distance between the sample quantile function and the model evaluated at sample probabilities. The specific distance depends on the chosen error norm. We formulate the estimation as a constrained linear regression problem, which differs from existing formulations.  Minimum quantile distance estimators \citep{LaRiccia2, Millar, Gilchrist2007quantile} minimize the difference between the sample quantile function and the model, where the distance is expressed as an integral over the unit interval. Related work on minimum Wasserstein (Kantorovich–Rubinstein) distance estimators is surveyed in \cite{Bassetti2, Bassetti, Bernton}. Maximum likelihood estimation is inconvenient for our model because it requires differentiating the inverse of a sum of quantile functions, and the resulting optimization problems are typically nonconvex. In contrast, our linear-regression formulation with linear constraints yields a convex optimization problem that can be solved efficiently with modern convex-programming solvers. Another efficient method, \cite{LaRiccia1}, uses only a fixed number of quantiles. Moment fitting and L-moment fitting \citep{Hosking, Karvanen} require computation of higher-order moments that can be unstable; we show how L-moment constraints can be incorporated into our framework as linear constraints. The quantile regression approach in \cite{Peng2023factor} optimizes a different objective, which is equivalent to the sum of continuous ranked probability scores \citep{Matheson1976score}. When the model contains only one basis function, our estimator reduces to estimating location and scale by weighted least squares \citep{Ogawa, Lloyd}. Using a similar idea, \cite{Kratz1996qqpareto} estimates the tail index of a Pareto distribution using log-transformed data. \cite{alvarez2023quantile} studies mixture quantile models and estimates parameters via a sieve-like generalized method of L-moments, which is related to but distinct from our $L_2$-based error measure.

We employ constraints and regularization in the regression formulation to improve out-of-sample performance on small datasets. A cardinality constraint can limit the maximum number of basis functions selected from a large candidate set; mixed-integer linear programming (MILP) solvers are effective for such problems. We compare the numerical performance of our algorithm with LASSO regression \citep{Santosa1986lasso, Tibshirani1996lasso}.

We study asymptotic properties of the estimator and show that, when the error function is the $L_q$ norm, the estimator is asymptotically a minimum $q$-Wasserstein (Kantorovich–Rubinstein) distance estimator. This result follows by viewing our optimization objective as a Monte Carlo approximation of the integral that defines the difference between two quantile functions. In the nonasymptotic case our objective is not the exact $q$-Wasserstein distance between the empirical quantile function and the model; nevertheless, the objective value converges to the Wasserstein distance asymptotically. This convergence allows the estimator to serve both parameter estimation and goodness-of-fit testing, since the Wasserstein distance has been used for goodness-of-fit tests \citep{Barrio, Panaretos}. We also establish consistency using general results for minimum distance estimators \citep{Newey2}, and we prove asymptotic normality of the estimator (for a fixed number of quantiles) by exploiting properties of order statistics \citep{Rao}. The latter is a special case of the result in \cite{LaRiccia1}.

The distance measured at sample probabilities can also be interpreted geometrically: it equals the sum of horizontal distances from the sample points to the line $y=x$ in a Q–Q plot. Hence the fitted model brings sample points as close as possible to the identity line. Q–Q plots have previously been used for hypothesis testing \citep{Loy2016QQplot}.

Quantile-based models have gained popularity because they provide distinct statistical and computational advantages compared to density-based methods. For a wide range of applications including time series and uncertainty quantification, and recent algorithmic developments, see for example \cite{zhao2025distributed, frizzo2025quantile, vedula2025quantile, deng2025quest, redivo2025mixtures}.

The paper is organized as follows. Section \ref{sec_model} describes the mixture-quantiles model and Section \ref{sec_calibration} defines the regression optimization problem. Section \ref{sec_minimum_distance} proves the relation to minimum Wasserstein estimators. Section \ref{sec_asymptotics_orderstat} establishes the asymptotic normality of the estimator. Section \ref{sec_gls} discusses properties of the estimator obtained by weighted least squares regression. Section \ref{sec_casestudy} contains two numerical experiments. Section \ref{sec_conclusion} concludes the paper.

\section{Mixture Quantiles Model} \label{sec_model}
Let
$I = $ number of basis quantile functions in the mixture quantiles model;
 $p = $ confidence level;
  $\bm{\theta} = ( \theta_0,\cdots,\theta_I )$ = vector of (unknown)  non-negative  coefficients;
  $\{Q_i(p)\}_{i=0}^I$ = basis quantile functions, where $Q_{0}(p) = 1$;
  $G(p,\bm{\theta})=$ quantile model with confidence level $p$ and parameter $\bm{\theta}$;
  $G_{\bm{\theta}}^{-1}(x) = \inf\{p\in (0,1) | G(p,\bm{\theta}) \geq x  \} = $  inverse function of $G(p,\bm{\theta})$ with respect to $p\,$.

This section introduces the mixture quantiles model (linear combination of quantiles) and formulates a linear regression   problem for parameter estimation.  We consider real-valued continuous random variables $X:\Omega \rightarrow \mathbb{R}$ defined on a probability space $(\Omega,\mathcal{F},P)$ with a distribution function $F_X(x) = P(X \leq x)$. The $p$-quantile is defined by $Q(p) = \inf \{ x \in \mathbb{R} | p \leq F_X(x) \}$. Let $\{X_n\}_{n=1}^N$ be $N$ i.i.d. random variables with distribution function  $F_X(x)$. The sample $p$-quantile is the corresponding quantile of the empirical distribution function $\widehat{F}(x) = \frac{1}{N}\sum_{n=1}^N \mathbf{1}_{\{X_i \leq x\}}$.

A location-scale family of distributions is obtained by
scaling and translation
a random
variable with a fixed distribution.  
  Consider an arbitrary quantile function $Q(p)$. The quantile function of the location-scale family  generated by quantile function $Q(p)$ is defined by
\begin{equation}
\mu + \sigma Q(p)  \;,
\end{equation}
where $\mu$ is the location parameter and $\sigma$ is the non-negative scale parameter. 
We construct a new family of quantile functions obtained by linearly combining quantile functions from different location-scale families.
 This new family differs from the mixture densities constructed with these selected distributions.

\paragraph{Model formulation}
 The mixture quantiles model is defined by
\begin{equation} \label{model}
\begin{aligned}
G(p,\bm{\theta})
&=
\sum_{i=0}^I  \theta_i Q_i(p)   
 \;,
\end{aligned}
\end{equation}
where $Q_0(p) = 1$ is the basis function for location parameter.  
The functions  $\{Q_i(p)\}_{i=1}^I$ can be any monotone non-decreasing function defined on the unit interval.  
 $\{Q_i(p)\}_{i=1}^I$ are linearly independent functions, i.e., any function $Q_i(p)$ cannot be represented by a linear combination of $\{Q_j(p)\}_{j\neq i}$. 
 We propose two types of basis functions: quantile functions of common distributions and monotone I-spline basis \citep{Ramsay1988Ispline}.   When  prior knowledge on the shape of the distribution is available, the common quantile functions such as those of normal and exponential distributions are preferred, while the splines are needed for  more complicated case such as multimodal distribution. 
\cite{Papp2011polynomial,Papp2014spline} show that piecewise I-spline with non-negative coefficients can approximate any bounded  continuous   nondecreasing function. We focus on common quantile functions in this study. 

The density corresponding to $G(p,\bm{\theta})$ is given by  $f_{\bm{\theta}}(x) = \frac{\partial}{\partial x} G_{\bm{\theta}}^{-1}(x) = \left( \frac{\partial}{\partial p} G(G_{\bm{\theta}}^{-1}(x),\bm{\theta}) \right)^{-1}$.
While all distributions in one location-scale family have equal skewness, the skewness of the distribution with quantile function $G(p,\bm{\bm{\theta}})$ is dependent on $\bm{\theta}$ nonlinearly if there are more than one quantile function in the mixture. 

We will standardize the quantile functions $\{Q_i(p)\}_{i=1}^I$ to ensure that the columns of the design matrix in the subsequent linear regression problem are on a comparable scale, which improves numerical stability of the convex optimization. Quantile functions  are standardized as follows. For two-tailed distributions, $Q_i(p)$ is standardized such that its 0.5-quantile is zero and unit interquartile range, i.e., $Q_i(0.5) = 0$, $Q_i(3/4)-Q_i(1/4) = 1$. For one-tailed distributions, $Q_i(p)$ is standardized to have zero zero-quantile and unit interquartile range, i.e., $Q_i(0) = 0$, $Q_i(3/4)-Q_i(1/4) = 1$. The I-spline basis functions are already on the same scale and thus do not need to be scaled. We do not standardize by mean and variance since they may not exist for some fat-tailed distributions.

\section{Parameter Estimation by Constrained Linear Regression} \label{sec_calibration}

Let
 $N=$ sample size;
  $J=$ number of selected probabilities; 
 $\{p_j\}_{j=1}^J$ = selected probabilities in ascending order; 
$\{y_j\}_{j=1}^J = $ sample $p_j$-quantiles of  $N$  observations; 
 $\bm{Y}_N = ( y_1, \cdots , y_J  )' = $ vector of $J$ sample quantiles of $N$ observations in ascending order;
 $x_{ji} = Q_{i}(p_j)$;
 $\bm{X} = [x_{ji}]_{J \times (I+1)} =
$ matrix of regression factors.

We want to find a quantile function $G(p,\bm{\theta})$  parametrized by $\bm{\theta}$, which is closest to the sample data. The distance is measured between the sample quantile function and $G(p,\bm{\theta})$ at some selected probabilities $\{p_j\}_{j=1}^J$.  The estimation problem is formulated as a linear regression problem. We can describe the setting in a matrix form
\begin{equation}
\begin{pmatrix}
y_1 \\ y_2 \\ \vdots \\ y_J
\end{pmatrix}
=
\begin{pmatrix}
 1		&	Q_1(p_1)	&	Q_2(p_1) 	& \ldots &Q_I(p_1)	 \\
 1		&	Q_1(p_2)	&	Q_2(p_2)	&		 &	\\
 \vdots	&	\vdots 		&				&\ddots	 &	\\
 1		&	Q_1(p_J)		&				&		 &Q_I(p_J)	\\
\end{pmatrix}_{J \times (I+1)}
\begin{pmatrix}
\theta_0 \\ \theta_1 \\ \vdots \\  \\ \theta_I
\end{pmatrix}
+
\begin{pmatrix}
\epsilon_1 \\ \epsilon_2 \\ \vdots  \\ \epsilon_J
\end{pmatrix}
\;.
\end{equation}
$\{\epsilon_j\}_{j=1}^J$ are the differences between the sample $y_j$ and the $p_j$-quantile of the model. $\{\epsilon_j\}_{j=1}^J$ are not independent or identical.
 Note that  $J$ is not necessarily the total number of observations. When $J=N$, $\bm{Y}$ is the vector of all samples in ascending order.  Sections \ref{sec_minimum_distance} and \ref{sec_asymptotics_orderstat} discuss two cases. The first case uses all available observations as $\bm{Y}$, i.e., $J=N$. The second case uses a fixed number $J$ of the observations as $N \rightarrow \infty$.  

\paragraph{Optimization problem statement}
The distance-minimization problem is
\begin{equation} \label{statement}
\min_{\bm{\theta} \in \bm{\Theta}} \; \bm{\mathcal{E}}(\bm{Y}_N - \bm{X}\bm{\theta}) + \lambda \, \Omega(\bm{\theta}) \;,
\end{equation}
where $\bm{\Theta}$ is a feasible set, $\bm{\mathcal{E}}$ is an error measure, and $\Omega$ is a regularization penalty term. Various error measures and penalty terms can be considered; this paper focuses on weighted $L_q$ norms. To avoid confusion between $p$ (probability) and the $L_p$ norm, we use $q$ to denote the norm index. In particular, we consider the weighted least absolute deviations and weighted least squares regressions corresponding to the weighted $L_1$ and $L_2$ norms, respectively. The considered optimization problem  (\refeq{statement}) is convex with respect to vector parameter $\bm{\theta}$.
Other possible error measures include the regular measures axiomatically defined in \cite{quadrangle}.

\paragraph{Constraints and penalties}
The principal constraint on $\bm{\theta}$ is the nonnegativity constraint
\begin{equation}
\theta_i \ge 0, \quad i = 1,\ldots,I \;,
\end{equation}
which ensures that $G(p,\bm{\theta})$ is nondecreasing in $p$, and thus a valid quantile function. In numerical experiments, components in $\bm{\theta}$ were frequently negative when the nonnegativity constraint was omitted, especially for small sample sizes.

To select a subset of basis quantile functions from a large candidate set, a direct approach is a cardinality constraint. This constraint is defined by the cardinality function
\begin{equation}
\mathrm{Card}(\bm{\theta}) = \sum_{i=1}^I \mathbf{1}_{\{\theta_i \neq 0\}} \le C \;,
\end{equation}
where $\mathbf{1}_{\{\cdot\}}$ denotes the indicator function (equal to 1 when the statement inside the braces is true and 0 otherwise). In the regression setting, this constraint can be handled efficiently by commercial mixed-integer solvers. Selecting basis elements from a mixture-density formulation, by contrast, is often computationally much more expensive (because of the nonconvexity of the likelihood function).

LASSO regression with penalty $\|\bm{\theta}\|_1$ can also be used for selection among basis quantile functions; it usually leads to non-optimal solutions, compared to cardinality constraint but computationally faster due to convexity. LASSO and Ridge regression with penalty $\|\bm{\theta}\|_2^2$ can be used to mitigate overfitting.

For I-spline basis functions, however, sparsity of coefficients may be less interpretable, since individual spline bases do not generally correspond to standalone sensible quantile functions. So small or zero coefficients do not have the same intuitive meaning as when using common-distribution quantiles. In particular, if all spline coefficients except the intercept vanish, the resulting function is constant and thus not a valid quantile function. Inspired by P-splines \citep{eilers2015twenty}, we therefore penalize differences between adjacent coefficients when using I-splines
\begin{align}
\Omega(\bm{\theta}) = \|B\bm{\theta}\|_2^2 \;,
\end{align}
where $B$ is the first-difference matrix so that $B\bm{\theta} = (\theta_2-\theta_1,\ldots,\theta_I-\theta_{I-1})'$. In the limit of a very large penalty, the model reduces to a straight line, which corresponds to the quantile function of a uniform distribution. Appendix \ref{sec_optimization} further discusses constraints, penalties, and error measures in the optimization problem.

\paragraph{Estimation of probabilities}
Consider the case $J=N$. Observe that if the selected probabilities $\{p_n\}_{n=1}^N$ equal the true probabilities $\{G_{\bm{\theta}^*}^{-1}(y_n)\}_{n=1}^N$ for the true parameter $\bm{\theta}^*$, then $\bm{\theta}^*$ is an optimal solution for \eqref{statement} (for any choice of $\bm{\mathcal{E}}$), because the residuals are zero. Since the true probabilities are unknown, we adopt a standard  estimator
\begin{equation}
p_n = \frac{n}{N+1}, \quad n = 1,\ldots,N \;,
\end{equation}
although other conventions for empirical quantiles exist. Note that the $i$-th order statistic $y_i$ is an asymptotically unbiased estimator of the $p_i$-quantile of the distribution with quantile function $G(p,\bm{\theta})$. There is no universal consensus on the best definition of sample quantiles \citep{Hosseini2016position}.

\paragraph{Q-Q Plot}
In a Q-Q plot, the horizontal distance between a data point $(G(p_n,\bm{\theta}),y_n)$ and the straight line $y=x$ is $|Q(p_i) - G(p,\bm{\theta})|^q$. Thus the estimation problem (\ref{statement}) can be viewed as  obtaining the best Q-Q plot by  minimizing the average distance between the data points $\{(G(p_n,\bm{\theta}),y_n)\}_{n=1}^N$ and the straight line. It is known from empirical evidence that  the Q-Q plot of samples against true quantiles often exhibits a noticeable deviation from the straight line at both ends, where extreme $y_n$ values reside. This finding provides additional support for the proposed downweighting approach suggested in eariler discussion of the sample probabilities.

\section{Minimum Wasserstein Distance Estimator}\label{sec_minimum_distance}

We make the following notations in this section:
 $\{w_j\}_{j=1}^J =$ non-negative weights;
 $w(p)=$ non-negative integrable weight function;
 $\mathcal{E}_q(x) = |x|^q$ ;  
 $\bm{\Theta}=$  feasible  set; 
 $\bm{x}_j=$  $j$-th row of matrix $\bm{X}$ with dimension ${J \times (I+1)}$;
 $$f_N(\bm{\theta})  = \left( \frac{1}{J} \sum_{j=1}^J w_j \mathcal{E}_q(y_j - \bm{x}_j \bm{\theta})  \right)^{\frac{1}{q}} =\text{  weighted variant of objective function in \eqref{statement}}; $$  
 $$f(\bm{\theta})  = \lim_{J\rightarrow\infty} \left( \frac{1}{J} \sum_{j=1}^J w_j \mathcal{E}_q(y_j - \bm{x}_j \bm{\theta}) \right)^{\frac{1}{q}}; $$
$\widehat{\bm{\theta}}_N 
=
 (\widehat{\theta}_{N0},\widehat{\theta}_{N1},\cdots,\widehat{\theta}_{NI})' 
 \in
  \argmin_{\bm{\theta}\in\bm{\Theta}} f_N(\bm{\theta}) =$  estimator with sample size $N$; 
 $\widehat{\bm{\theta}}_0 = \argmin_{\bm{\theta}\in\bm{\Theta}} f(\bm{\theta})$;
 $\bm{\theta}^{*} = (\theta_{0}^*,\theta_{1}^*,\cdots,\theta_{I}^*)' =$ true parameter vector; 
 The weighted Wasserstein distance between two distributions with quantile functions $Q_1(p)$ and $Q_2(p)$ is defined by
\begin{align}\label{def_wasserstein}
\mathcal{W}_q(Q_1(p),Q_2(p)) = \left( 
\int_0^1 \mathcal{E}_q(Q_1(p) - Q_2(p)) w(p) \mathrm{d}p 
 \right)^{\frac{1}{q}}.
 \end{align}

  This section considers the case where the number of selected probabilities $J$ is equal to the sample size $N$.  We have $p_n = \frac{n}{N+1}$, $n=1,\cdots,N$. We prove that the estimator is asymptotically the minimum Wasserstein distance estimator. 
 The model is said to be { correctly specified} if the true quantile function is in the form of \eqref{model} and the true parameter $\bm{\theta}^*$ is in the interior of the feasible set $\bm{\Theta}$.

\bigskip

\begin{Proposition}\label{prop_consistent}
If  (i) $\widehat{\bm{\theta}}_0$ is a unique minimizer of  $f(\bm{\theta})$ ;  (ii) $\widehat{\bm{\theta}}_0$ is
an element in the interior of a convex set $\bm{\Theta}$, then $\widehat{\bm{\theta}}_N \overset{p}{\rightarrow} \widehat{\bm{\theta}}_0$. If the model is correctly specified,  then $\widehat{\bm{\theta}}_N$ is a consistent estimator.  
\end{Proposition}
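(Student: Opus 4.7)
My plan is to deduce consistency from a standard minimum-distance/M-estimator consistency theorem (of Newey-McFadden type, as the authors explicitly reference \cite{Newey2}). Such a theorem requires four ingredients: (a) a unique minimizer $\widehat{\bm{\theta}}_0$ of the limiting criterion $f(\bm{\theta})$; (b) $\widehat{\bm{\theta}}_0$ in the interior of a convex parameter set; (c) continuity of $f$; and (d) uniform convergence in probability of $f_N$ to $f$ on $\bm{\Theta}$. Items (a) and (b) are given as hypotheses, so the work reduces to establishing (c) and (d) and then identifying $\widehat{\bm{\theta}}_0$ with $\bm{\theta}^{*}$ in the correctly specified case.

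The central step is the uniform-in-$\bm{\theta}$ convergence of $f_N$ to $f$. I would proceed in two stages. First, since $f_N(\bm{\theta})^q = \frac{1}{J}\sum_{j=1}^J w_j \mathcal{E}_q(y_j - \bm{x}_j\bm{\theta})$ with $p_j = j/(N+1)$ and $J=N$, I would split the deviation from $f(\bm{\theta})^q$ into (i) an order-statistic error, replacing $y_j$ by the population quantile $Q(p_j)$, and (ii) a Riemann-sum error, replacing $\frac{1}{N}\sum_j w(p_j)\mathcal{E}_q(Q(p_j)-G(p_j,\bm{\theta}))$ by $\int_0^1 w(p)\mathcal{E}_q(Q(p)-G(p,\bm{\theta}))\,\mathrm{d}p$, which by the definition \eqref{def_wasserstein} equals $\mathcal{W}_q(Q,G(\cdot,\bm{\theta}))^q$, so $f(\bm{\theta})=\mathcal{W}_q(Q,G(\cdot,\bm{\theta}))$. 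The order-statistic error is controlled pointwise by the Glivenko-Cantelli theorem (uniform convergence of the empirical quantile to $Q$ away from the tails), combined with integrability/moment conditions on $Q$ and on the $Q_i$'s to handle the $L_q$-norm at the endpoints. The Riemann-sum error vanishes by standard arguments using continuity of the integrand in $p$.

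Uniformity in $\bm{\theta}$ is where I would lean on the structure of the problem: both $f_N(\bm{\theta})$ and $f(\bm{\theta})$ are convex functions of $\bm{\theta}$ (as the $L_q$ norm of an affine map for $q\ge 1$). By the classical convexity lemma (a convex-function version of Dini's theorem), pointwise convergence in probability on a dense subset of the convex set $\bm{\Theta}$ implies uniform convergence in probability on every compact subset of its interior. Since $\widehat{\bm{\theta}}_0$ lies in the interior by assumption (ii), this is exactly the form of uniform convergence needed to feed into the consistency theorem. Continuity of $f$ follows from continuity of $\mathcal{E}_q$ and dominated convergence.

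With (a)-(d) in hand the consistency theorem yields $\widehat{\bm{\theta}}_N \overset{p}{\rightarrow} \widehat{\bm{\theta}}_0$. For the second assertion, under correct specification the true quantile function equals $G(\cdot,\bm{\theta}^{*})$, so $f(\bm{\theta}^{*}) = \mathcal{W}_q(G(\cdot,\bm{\theta}^{*}),G(\cdot,\bm{\theta}^{*})) = 0$, making $\bm{\theta}^{*}$ a minimizer of the nonnegative criterion; uniqueness (i) forces $\widehat{\bm{\theta}}_0 = \bm{\theta}^{*}$ and consistency follows. The main obstacle I anticipate is controlling the tail behavior in the order-statistic step: the $L_q$ norm weighs the (potentially fat) tails, so care is needed with moment assumptions on $Q$ and the basis $Q_i$, or equivalently with assumptions on $w(p)$ near $0$ and $1$, to ensure both pointwise convergence and the integrability required for dominated convergence.
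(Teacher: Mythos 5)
Your proposal is correct and follows essentially the same route as the paper: the paper's proof simply notes convexity of the criterion in $\bm{\theta}$, invokes the pointwise convergence $f_N(\bm{\theta})\overset{p}{\rightarrow}f(\bm{\theta})$ from the Monte Carlo/Wasserstein argument, and cites Theorem 2.7 of Newey--McFadden, which is exactly the convexity-based consistency theorem whose internal machinery (the convexity lemma upgrading pointwise to uniform convergence on compacta, plus the unique interior minimizer) you reconstruct explicitly. Your additional care about tail integrability and the order-statistic versus Riemann-sum decomposition fills in details the paper leaves implicit, but does not change the argument.
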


 Since  $\{Q_i(p)\}_{i=0}^I$ are linearly independent, $G(p,\bm{\theta}_1) = G(p,\bm{\theta}_2)$ if and only if $\bm{\theta}_1 = \bm{\theta}_2$ for $\bm{\theta}_1,\bm{\theta}_2 \in \bm{\Theta}$. Thus if the model is correctly specified, there  exists a unique minimizer $\bm{\theta}^* \in \bm{\Theta}$, i.e.,  the  assumption {\it (i)}   is automatically satisfied. 
 Proposition \ref{prop_consistent} is a general result that holds for convex error functions ${\mathcal{E}}$.

\begin{Proposition}\label{prop_integral_wass}
For $\bm{\theta} \in \bm{\Theta}$, $f_N(\bm{\theta})$ converges in probability to the weighted $q$-Wasserstein distance  between the model $G(p,\bm{\theta})$ and true quantile function $Q(p)$.
\end{Proposition}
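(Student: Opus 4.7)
The plan is to view $f_N(\bm{\theta})^q$ as a weighted Monte Carlo / Riemann-type approximation of the integral defining $\mathcal{W}_q(Q, G(\cdot,\bm{\theta}))^q$, and to show convergence by controlling two sources of error separately: (a) the Riemann-sum error from replacing the integral by a sum over the grid $\{p_n = n/(N+1)\}_{n=1}^N$, and (b) the sampling error from replacing the true quantiles $Q(p_n)$ by the order statistics $y_n$. Throughout I would take the discrete weights to be $w_n = w(p_n)$, so that the two objects are comparable.

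First I would write, using $\bm{x}_n \bm{\theta} = G(p_n,\bm{\theta})$ and the triangle inequality for the $L_q$ seminorm induced by the weighted counting measure,
\begin{equation*}
\bigl| f_N(\bm{\theta}) - A_N(\bm{\theta}) \bigr| \leq R_N(\bm{\theta}),
\qquad
A_N(\bm{\theta}) := \left( \tfrac{1}{N} \sum_{n=1}^N w(p_n) |Q(p_n) - G(p_n,\bm{\theta})|^q \right)^{1/q},
\end{equation*}
where the remainder satisfies $R_N(\bm{\theta}) \leq \bigl(\tfrac{1}{N}\sum_n w(p_n) |y_n - Q(p_n)|^q\bigr)^{1/q}$. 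The term $A_N(\bm{\theta})$ is a standard Riemann sum for the integrand $p \mapsto |Q(p)-G(p,\bm{\theta})|^q w(p)$ evaluated at the equi-spaced points $p_n = n/(N+1)$; assuming the integrand is Riemann-integrable on $(0,1)$ (which follows from continuity of $Q$, $G(\cdot,\bm{\theta})$, and $w$ together with the integrability condition implicit in \eqref{def_wasserstein}), we get $A_N(\bm{\theta}) \to \mathcal{W}_q(Q, G(\cdot,\bm{\theta}))$ deterministically as $N \to \infty$.

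The remaining task is to show $R_N(\bm{\theta}) \xrightarrow{p} 0$. Here I would use the classical order-statistics result (see \cite{Rao}) that for $p_n = n/(N+1)$ and $N \to \infty$ with $p_n$ bounded away from $0$ and $1$, the order statistic $y_n$ is consistent for $Q(p_n)$; more globally, one can identify $\tfrac{1}{N}\sum_n |y_n - Q(p_n)|^q$ with (up to $O(1/N)$ correction) the $q$-Wasserstein distance $\mathcal{W}_q^q(F_N, F)$ between the empirical distribution $F_N$ and the true distribution $F$, which tends to zero in probability under a finite weighted $q$-th moment condition on $Q$. Incorporating the weights $w(p_n)$ amounts to working with the measure $w(p)\mathrm{d}p$ instead of Lebesgue measure, and the same argument goes through provided $\int_0^1 |Q(p)|^q w(p)\,\mathrm{d}p < \infty$.

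The main obstacle I expect is the tail contribution from the extreme order statistics $y_1$ and $y_N$: these are noisy estimates of $Q(p_1)$ and $Q(p_N)$ when $Q$ has heavy tails, so the weighted sum $\tfrac{1}{N}\sum w(p_n)|y_n - Q(p_n)|^q$ need not be small without an integrability hypothesis. The cleanest way to handle this is to impose a weighted $q$-th moment condition of the form $\int_0^1 |Q(p)|^q w(p) \mathrm{d}p < \infty$ (which is already needed to make $\mathcal{W}_q(Q, G(\cdot,\bm{\theta}))$ finite), and then invoke the convergence of the empirical $q$-Wasserstein distance. Assembling steps, the triangle inequality applied to $|f_N - \mathcal{W}_q|\leq |f_N - A_N| + |A_N - \mathcal{W}_q|$ yields the claimed convergence in probability.
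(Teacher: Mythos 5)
Your proposal is correct in substance, but it takes a more explicit route than the paper, whose entire proof is a one-line appeal to ``standard Monte Carlo integration'' \citep{robert1999montecarlo} asserting display \eqref{converge_wass} directly. Your decomposition is different and, frankly, more informative: you split the error into a deterministic Riemann-sum discrepancy $|A_N - \mathcal{W}_q|$ over the grid $p_n = n/(N+1)$ plus a stochastic term $R_N$ controlled by the weighted $\ell_q$ distance between the order statistics $y_n$ and the true quantiles $Q(p_n)$, the latter handled via convergence of the empirical Wasserstein distance under a $q$-th moment condition. The paper's ``Monte Carlo'' reading implicitly requires writing $y_n = Q(U_{(n)})$ for uniform order statistics $U_{(n)}$ and then silently swapping $G(U_{(n)},\bm{\theta})$ and $w(U_{(n)})$ for their values at the deterministic $p_n$; your version makes that missing step explicit and isolates exactly where an integrability hypothesis is needed, which the paper never states for this proposition (it only surfaces later as condition (iii) of Proposition \ref{prop_wass}).

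Two points remain loose in your write-up, though you flag the substance of both. First, the Minkowski/reverse-triangle step bounding $|f_N - A_N|$ by $R_N$ requires $q \geq 1$; this is harmless since the paper only uses $q = 1,2$, but it should be said. Second, the identification of $\tfrac{1}{N}\sum_n |y_n - Q(p_n)|^q$ with $\mathcal{W}_q^q(F_N,F)$ ``up to $O(1/N)$'' and the Riemann-sum convergence of $A_N$ both become delicate near $p = 0,1$ when $Q$ or $w$ is unbounded there (the integral in \eqref{def_wasserstein} is then improper, and equispaced Riemann sums of improper integrals need not converge without a domination argument). Your proposed fix --- assuming $\int_0^1 |Q(p)|^q w(p)\,\mathrm{d}p < \infty$ and the analogous condition for the basis quantiles --- is the right one, but to be complete you would need to carry it through both error terms, e.g.\ by truncating to $[\delta, 1-\delta]$, controlling the tail blocks uniformly in $N$ via the moment condition, and then letting $\delta \to 0$. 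With that addition your argument is a complete proof, and a stronger one than the paper supplies.
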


The proofs are deferred to Appendix \ref{sec_proof_prop_consistent} and \ref{sec_proof_prop_integral_wass}. Note that although the estimator converges to the one that minimizes the $q$-Wasserstein distance between the model and true quantile function, the objective function $f_N(\bm{\theta})$ is not the $q$-Wasserstein distance.
 Furthermore, with the linear formulation with respect to $\bm{\theta}$, we can bound the $q$-Wasserstein distance $\mathcal{W}_q(Q_1(p),Q_2(p))$ by the scaled $L_1$ distance between the estimator and true parameters.

\begin{Proposition} \label{prop_wass}
Suppose (i) the conditions in Proposition \ref{prop_consistent} are satisfied; (ii) the model is correctly specified; (iii) $\forall i, (\int_0^1 \mathcal{E}_q\left(    Q_i(p)  \right) w(p) \mathrm{d}p)^{1/q} \leq M$, where $M$ is a constant.  Then the $q$-Wasserstein distance ${\mathcal{W}}$ between the model and true distribution is bounded  by the scaled $L_1$ distance between the estimator and true parameters.
\end{Proposition}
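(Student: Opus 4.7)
The proof reduces to an application of Minkowski's inequality. Since the model is correctly specified (assumption (ii)), the true quantile function is $Q(p) = G(p,\bm{\theta}^*) = \sum_{i=0}^I \theta_i^* Q_i(p)$, and by Proposition \ref{prop_consistent} the estimator $\widehat{\bm{\theta}}_N$ converges to $\bm{\theta}^*$, so the difference $G(p,\widehat{\bm{\theta}}_N) - Q(p)$ is linear in the parameter deviation:
\begin{equation*}
G(p,\widehat{\bm{\theta}}_N) - Q(p) = \sum_{i=0}^I (\widehat{\theta}_{Ni} - \theta_i^*)\, Q_i(p).
\end{equation*}

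\textbf{Main steps.} First, I would substitute this expression into the definition \eqref{def_wasserstein} of the weighted $q$-Wasserstein distance, yielding
\begin{equation*}
\mathcal{W}_q\bigl(G(p,\widehat{\bm{\theta}}_N),Q(p)\bigr) = \left(\int_0^1 \left|\sum_{i=0}^I (\widehat{\theta}_{Ni}-\theta_i^*) Q_i(p)\right|^q w(p)\,\mathrm{d}p\right)^{1/q}.
\end{equation*}
The map $h \mapsto (\int_0^1 |h(p)|^q w(p)\,\mathrm{d}p)^{1/q}$ is a seminorm on the space of measurable functions (a norm modulo $w$-null sets), so Minkowski's inequality applies. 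Next, pulling the scalars $(\widehat{\theta}_{Ni}-\theta_i^*)$ out and applying the triangle inequality gives
\begin{equation*}
\mathcal{W}_q\bigl(G(p,\widehat{\bm{\theta}}_N),Q(p)\bigr) \le \sum_{i=0}^I |\widehat{\theta}_{Ni}-\theta_i^*| \left(\int_0^1 |Q_i(p)|^q w(p)\,\mathrm{d}p\right)^{1/q}.
\end{equation*}
Finally, invoking assumption (iii) that each basis function satisfies $\bigl(\int_0^1 \mathcal{E}_q(Q_i(p)) w(p)\,\mathrm{d}p\bigr)^{1/q} \le M$, I conclude
\begin{equation*}
\mathcal{W}_q\bigl(G(p,\widehat{\bm{\theta}}_N),Q(p)\bigr) \le M\,\|\widehat{\bm{\theta}}_N - \bm{\theta}^*\|_1,
\end{equation*}
which is the scaled $L_1$ bound claimed.

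\textbf{Obstacles.} There is essentially no technical obstacle; the result is a direct consequence of the linearity of the model in $\bm{\theta}$ combined with Minkowski's inequality. The only point worth flagging is that condition (iii) is precisely what is needed to control the constant $M$ uniformly over $i$; without it, each basis function would contribute its own constant $M_i$ and the bound would read $\sum_i M_i |\widehat{\theta}_{Ni}-\theta_i^*|$, which is still a weighted $\ell^1$-type bound but not as clean. The role of Proposition \ref{prop_consistent} is implicit: it guarantees $\|\widehat{\bm{\theta}}_N-\bm{\theta}^*\|_1 \to 0$ in probability, so the bound actually delivers consistency of the estimator in Wasserstein distance as a corollary.
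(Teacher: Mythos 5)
Your proposal is correct and follows essentially the same route as the paper's proof: decompose $Q(p)-G(p,\widehat{\bm{\theta}}_N)$ linearly in the parameter deviation, apply the triangle (Minkowski) inequality for the weighted $L_q$ seminorm, and bound each basis integral by $M$ via assumption (iii). Your explicit identification of the key inequality as Minkowski's for the seminorm $h \mapsto \bigl(\int_0^1 |h(p)|^q w(p)\,\mathrm{d}p\bigr)^{1/q}$ is in fact slightly more precise than the paper's phrasing, which attributes it to ``subadditivity of $\mathcal{E}$.''
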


The proof is deferred to Appendix \ref{sec_proof_prop_wass}. Proposition \ref{prop_wass}  holds for any convex and positively homogeneous
 error function. For a random variable with quantile function $Q(p)$, the lower and upper partial $q$-moments with a reference confidence level $p_0 \in(0,1)$  are defined by $\int_{-\infty}^{p_0} Q(p)^q\mathrm{d}p$ and $\int_{p_0}^{\infty} Q(p)^q \mathrm{d}p$, respectively (for definition and  application  in finance, see
  \cite{BAWA1977partialmoment_CAPM}). When $w(p)=1$, the condition $(iii)$ is equivalent to the requirement that the random variables with quantile function $\{Q_i(p)\}_{i=1}^I$ have finite lower and upper partial moments.  We provide a simulation study in Appendix \ref{sec_sim_convergence} to verify the convergence.

\section{Asymptotic Normality}\label{sec_asymptotics_orderstat}
Let
 $G_{\bm{\theta}}^{-1}(y)=$ inverse function of $G(p,\bm{\theta})$ with respect to $p$;
 $\bm{W}=$ weight matrix;
  $\bm{C} = [c_{ij}]_{J \times J}
=$ asymptotic covariance matrix of difference between sample quantile and true quantile,  $ c_{ij} = \frac{p_i(1-p_j)}{ f(G_{\bm{\theta}}^{-1}(p_i)) f(G_{\bm{\theta}}^{-1}(p_j))} ,\quad i \leq j \quad \text{and} \;\;
c_{ij} = c_{ji} ,\quad i \leq j $ ;
 $\bm{H}
=
(\bm{X}'\bm{W}\bm{X})^{-1} \bm{X}' \bm{W} \bm{C} \bm{W}' \bm{X} (\bm{X}' \bm{W}' \bm{X})^{-1} $ ;
 $\bm{H}_0 = 
 (\bm{X}'\bm{C}^{-1}\bm{X})^{-1}$ ;
 $\bm{Y}^* = (G_{\bm{\theta}^*}^{-1}(p_1),G_{\bm{\theta}^*}^{-1}(p_2),\cdots,G_{\bm{\theta}^*}^{-1}(p_J))=$ vector of true quantiles of probabilities $\{p_j\}_{j=1}^J$ .

We study an estimator that selects a fixed number $J$ of probabilities in this section. This is in contrast to Section \ref{sec_minimum_distance} where $J=N$. The estimator is more robust to outliers since only the sample $p_i$-quantiles  are used in the regression. Here we consider the weighted $L_2$ norm as the error measure $\bm{\mathcal{E}}$ in \eqref{statement}.  The  proof of the following proposition  is similar to \cite{Ogawa}  but presented in matrix form.

\begin{Proposition}\label{prop_orderstat} If  
  (i) $G_{\bm{\theta}}^{-1}(y)$ is differentiable w.r.t.  $y$;  (ii) the density function $\frac{\partial}{\partial y}G_{\bm{\theta}}^{-1}(y)$ is positive and bounded; (iii)  $\bm{X}'\bm{W}\bm{X}$ is invertible, then
  \medskip
\begin{itemize}
\item The estimator converges in distribution to the normal distribution
\begin{equation}
\sqrt{N}
(
\widehat{\bm{\theta}}_N - \bm{\theta}^{*}
)
 \overset{d}{\rightarrow}
  \mathcal{N}(\bm{0},\bm{H}) \;.
\end{equation}
\item The optimal weight for the best linear unbiased estimator (BLUE) that minimizes the variance  is
\begin{equation} \label{opt_weight}
\bm{W} = \bm{C}^{-1} \;.
\end{equation}
\item The optimal estimator obtained with the optimal weight converges in distribution to a normal distribution
\begin{equation} \label{opt_estimator}
\sqrt{N}
(
\widehat{\bm{\theta}}_N - \bm{\theta}^{*}
)
 \overset{d}{\rightarrow} 
 \mathcal{N}(\bm{0} , \bm{H}_0) \;,
\end{equation} 
where $\bm{H}_0 = 
 (\bm{X}'\bm{C}^{-1}\bm{X})^{-1}$.
\end{itemize}
\end{Proposition}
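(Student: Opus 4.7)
My plan is to reduce the proposition to the classical joint asymptotic normality of a fixed-size vector of sample quantiles, followed by straightforward linear-algebra manipulations of the closed-form weighted least squares estimator. By assumption (iii), the weighted $L_2$ problem has the unique minimizer $\widehat{\bm{\theta}}_N = (\bm{X}'\bm{W}\bm{X})^{-1}\bm{X}'\bm{W}\bm{Y}_N$. Under correct specification $\bm{Y}^* = \bm{X}\bm{\theta}^*$, so
\[
\widehat{\bm{\theta}}_N - \bm{\theta}^* = \bm{A}(\bm{Y}_N - \bm{Y}^*), \qquad \bm{A} := (\bm{X}'\bm{W}\bm{X})^{-1}\bm{X}'\bm{W}.
\]

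The key input is the classical multivariate central limit theorem for sample quantiles (see \cite{Rao}): assumptions (i) and (ii), which give a positive and bounded density at every true quantile $G_{\bm{\theta}^*}^{-1}(p_j)$, yield
\[
\sqrt{N}(\bm{Y}_N - \bm{Y}^*) \overset{d}{\rightarrow} \mathcal{N}(\bm{0}, \bm{C}),
\]
with $\bm{C}$ exactly as defined in the notations. Applying the continuous mapping theorem to the deterministic linear map $\bm{A}$ immediately gives
\[
\sqrt{N}(\widehat{\bm{\theta}}_N - \bm{\theta}^*) \overset{d}{\rightarrow} \mathcal{N}(\bm{0}, \bm{A}\bm{C}\bm{A}') = \mathcal{N}(\bm{0}, \bm{H}),
\]
which is the first bullet.

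For the BLUE statement I would give a direct Gauss-Markov style verification. Set $\bm{A}_0 := \bm{H}_0 \bm{X}' \bm{C}^{-1}$, corresponding to $\bm{W} = \bm{C}^{-1}$. Both $\bm{A}$ and $\bm{A}_0$ satisfy $\bm{A}\bm{X} = \bm{I}$, so a short calculation shows the cross-term $\bm{A}_0 \bm{C} (\bm{A} - \bm{A}_0)' = \bm{H}_0 \bm{X}' (\bm{A} - \bm{A}_0)' = \bm{0}$. Consequently
\[
\bm{H} = \bm{A}\bm{C}\bm{A}' = \bm{H}_0 + (\bm{A} - \bm{A}_0)\bm{C}(\bm{A} - \bm{A}_0)' \;\succeq\; \bm{H}_0,
\]
with equality precisely when $\bm{A} = \bm{A}_0$, i.e.\ $\bm{W} = \bm{C}^{-1}$. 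Substituting $\bm{W} = \bm{C}^{-1}$ into the formula for $\bm{H}$ then simplifies to $\bm{H}_0 = (\bm{X}' \bm{C}^{-1} \bm{X})^{-1}$, giving the second and third bullets.

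The main obstacle is really just the multivariate asymptotic normality of the sample quantile vector with the specific covariance $\bm{C}$; this is classical but relies essentially on (i) and (ii) through the Bahadur-type representation (or, equivalently, through the Beta distribution $\text{Beta}(n, N+1-n)$ of the underlying uniform order statistics that the paper already invoked in \eqref{taylor}). Once this single ingredient is quoted, everything else reduces to the linear algebra of the Loewner order sketched above.
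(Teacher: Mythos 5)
Your proposal is correct and follows essentially the same route as the paper: both start from the closed-form weighted least squares solution, invoke the classical joint asymptotic normality of a fixed number of sample quantiles with covariance $\bm{C}$, and push the limit through the linear map $(\bm{X}'\bm{W}\bm{X})^{-1}\bm{X}'\bm{W}$. The only difference is that where the paper simply cites the generalized Gauss--Markov theorem of Aitken for the optimality of $\bm{W}=\bm{C}^{-1}$, you supply the explicit Loewner-order verification via the vanishing cross-term, which is a correct and self-contained way to fill in that cited step.
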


The proof is deferred to Appendix \ref{sec_proof_prop_orderstat}. \cite{LaRiccia1} studies a general case where $G(p,\bm{\theta})$ is not linear with respect to $\bm{\theta}$. The asymptotic normality is proved directly with the inverse function theorem and Taylor series expansion. The resultant optimal weight matrix $\bm{W}$ is the same as what we obtain in \eqref{optimal_weight}. 

The weight matrix obtained in \eqref{opt_weight} is not directly applicable, since it requires knowing the covariance matrix, which is a function of the unknown density. The feasible weighted least squares regression uses an estimated covariance matrix to replace matrix $\bm{H}$ in the regression. It can be obtained by calculating the covariance matrix of the residuals of ordinary least squares regression, or by plugging point density estimation in matrix $\bm{C}$. 
Though these estimators may result in suboptimal weight, empirical results in \cite{Ergashev} show that the impact is insignificant  as long as the extreme tail observations are assigned with small weights.

\section{Discussion of Estimation with Weighted Least Squares Regression} \label{sec_gls}

This section discusses some properties of the estimator obtained by weighted least squares regression. Since the sample size $N$ does not play a role in the conclusion of this section, we omit the superscripts and subscripts $N$. 
The objective function equals
\begin{equation}
\bm{\mathcal{E}} ( \bm{Y} - \bm{X} \bm{\theta} )  =  ( \bm{Y} - \bm{X} \bm{\theta} )' \bm{W}  ( \bm{Y} - \bm{X} \bm{\theta} ) \;,
\end{equation}
where $\bm{W} = [w_{ij}]_{J \times J}$ is a symmetric  positive-definite weight matrix. 
The estimator has a  closed form solution
\begin{equation} \label{closedform}
\widehat{\bm{\theta}}  = (\bm{X}'\bm{W}\bm{X})^{-1}\bm{X}'\bm{W}\bm{Y} \;.
\end{equation}
Notice that $\widehat{\theta}_i$  are linear combinations of the observations $y_i$. Thus, it can be regarded as an L-estimator \citep{david2004order}. 
The results in this section all hold for a symmetric matrix $\bm{W}$.

\subsection{Equivariance}
Let 
$\widehat{\bm{\theta}} = (\widehat{\theta}_0, \widehat{\theta}_1,\cdots,\widehat{\theta}_I)$  $=$ the estimator;
 $\widetilde{\bm{\theta}}$ $=$ the estimator with scaled standard distributions $Q_i(p)$ and scaled and shifted $\bm{Y}$;
  $\bm{D}$ $=$   $\text{diag}(1,k_1,k_2,\cdots,k_I)$ = diagonal matrix of scale parameters of $Q_i(p)$;
  $\bm{S}$ $=$ $(s,s,\cdots,s)'$ = location shift vector of data $\bm{Y}$;
 $\bm{S}_0 = (s,0,0,\cdots,0)'$ ;
 $\sigma$ $=$ scale parameter of data $\bm{Y}$ .

The estimator obtained by weighted least squares regression possesses  nice properties   such as location-scale equivariance \citep{lehmann2006theory}. In addition, we can study the impact of nonnormalized basis quantile function. If we replace a quantile function $Q_i(p)$ in the mixture with a scaled one $k Q_i(p)$, the scale estimator $\widetilde{\theta}_i$ becomes $\frac{1}{k} \widetilde{\theta}_i$  while the other scale parameters and the location parameter $\{\widetilde{\theta}_j\}_{j \neq i}$  do not vary. 
These properties can be obtained by the following direct matrix calculation using the closed form solution \eqref{closedform}
\begin{equation}\label{GLS_property}
\widetilde{\bm{\theta}} = (\bm{D} \bm{X}'\bm{W}\bm{X} \bm{D})^{-1} \bm{D} \bm{X}'\bm{W}(\sigma \bm{Y} + \bm{S}) = \sigma \bm{D}^{-1} \widehat{\bm{\theta}} + \bm{S}_0	\;.
\end{equation}

\subsection{Model with Single Basis Function}

We consider the simplest case where there is only one component $Q(p)$ in the mixture and the error function is $L_2$ norm. We estimate the location and scale parameters $\theta_0$, $\theta_1$ in a location-scale family with ordinary least squares regression. Using identity matrix as $\bm{W}$ in \eqref{closedform}, we obtain
\begin{equation} \label{univariate_sol}
 \widehat{\theta}_0 
 = \sum_{n=1}^N \frac{(\sum_{n=1}^N q_n^2) - (\sum_{n=1}^N q_n)q_n}{N(\sum_{n=1}^N q_n^2) - (\sum_{n=1}^N q_n)^2}  y_n \;,\quad
 \widehat{\theta}_1 = \sum_{n=1}^N \frac{  -(\sum_{n=1}^N q_n) + N q_n }{N(\sum_{n=1}^N q_n^2) - (\sum_{n=1}^N q_n)^2}  y_n \;,
\end{equation}
where $q_n = Q(p_n)$, $n=1,\ldots , N$.

Both the location and the scale estimator are a weighted sum of the observations $\{y_n\}_{n=1}^N$. If the true distribution is symmetric, we have $\sum_{n=1}^N q_n = 0$. We have that $\widehat{\theta}_0$ is the sample mean $\sum_{n=1}^N y_n / N$.
If the observations $y_n$ equal exactly the $\frac{n}{N+1}$-quantiles of the true distribution, i.e., $y_n = \theta_0 + \theta_1 q_n$, we recover the true parameters $\widehat{\theta}_0 = \theta_0$, $\widehat{\theta}_1 = \theta_1$. 

\section{Case Study}\label{sec_casestudy}

This section contains numerical experiments on simulated and real-world data to demonstrate the statistical and computational advantage of our approach. Section \ref{sec_gmm} compares with a well-specified Gaussian mixture model in simulation. Section \ref{sec_electricity} compares with Gaussian mixture model and Pareto distribution on body and tail fitness on electricity price dataset. Section \ref{sec_sim_convergence}  studies the convergence of Wasserstein distance bewteen the mixture quantile model and the true quantile function with a simulated dataset. Section \ref{sec_dd} compares the goodness-of-fit of basis quantile function selected via LASSO regression and a cardinality constraint. For implementation details, see the GitHub repository \href{https://github.com/yzli90/Mixture_Quantile}{https://github.com/yzli90/Mixture\_Quantile}.

\subsection{Comparison with Well-Specified Gaussian Mixture Model in Simulation}\label{sec_gmm}

We simulate data from a Gaussian mixture distribution and compare the fitted mixture quantile (MQ) model against a well-specified Gaussian mixture model (GMM). The aim is to evaluate whether the mixture quantile model can attain comparable statistical performance and likelihood despite the GMM being the true model.

\paragraph{Data} We simulate $N=5{,}000$ independent observations from the following five-component Gaussian mixture
$$\frac{3}{10}\mathcal{N}(10, 0.04) + \frac{3}{10}\mathcal{N}(10, 6.25) + \frac{1}{10}\mathcal{N}(12, 4) + \frac{2}{10}\mathcal{N}(13, 4) + \frac{1}{10}\mathcal{N}(14, 4)$$
where $\mathcal{N}(\mu,\sigma^2)$ denotes the normal distribution with mean $\mu$ and variance $\sigma^2$.

\paragraph{Experimental setup}
We perform 5-fold cross validation on the simulated dataset. The GMM benchmark is a 5-component Gaussian mixture fitted by the SciPy package. For the mixture quantile model, we use an I-spline basis with 100 knots and order 3, plus an additional Gaussian quantile function as the basis quantile functions. The P-spline smoothness penalty is fixed at 100  and not tuned.  The parameter estimation problem of MQ is solved by GUROBI.  For comparable optimization precision, the tolerance parameter for the solver in the SciPy package is set to $10^{-6}$, roughly matching GUROBI's tolerance. The maximum iteration limit is set to $1,000$.

\paragraph{Results}
Table \ref{table_gmm} reports averaged metrics across cross-validation folds. The principal observations are:
\begin{itemize}
\item \textbf{Computation time:} MQ achieves markedly faster training time than GMM for the reported in-sample fit.
\item \textbf{In-sample fit:} MQ attains smaller MSE and MAE (the MQ objective is $L_2$/quantile-based), while GMM attains a slightly better KS statistic.
\item \textbf{Out-of-sample fit:} Both models perform comparably; out-of-sample MSEs are essentially identical and MAE differs marginally. GMM attains a slightly better out-of-sample log-likelihood, as expected given that MLE optimizes likelihood for the true-model family. Notably, MQ achieves a log-likelihood very close to that of the true-model GMM, indicating that the quantile-mixture formulation can approximate the DGP well even when the DGP is a Gaussian mixture.
\end{itemize}

Overall, the out-of-sample metrics are sufficiently close to conclude that MQ and GMM exhibit comparable statistical performance in this well-specified setting. Figure \ref{fig_gmm} displays the in-sample density and quantile plots of the results from the last fold in cross validation. the MQ density shows small wiggles, suggesting that increased smoothing could be beneficial. Both models align closely with the empirical quantile function.

\begin{table}[htbp]
\centering
\begin{tabular}{|l|c|c|c|c|}
\hline
 & \multicolumn{2}{c|}{In-sample} & \multicolumn{2}{c|}{Out-of-sample} \\
\cline{2-5}
Metric & MQ  & GMM & MQ  & GMM \\
\hline
Training time (s) & $\mathbf{0.0810}$ & $0.2712$ &  $-$ & $-$  \\
\hline
MAE (Wasserstein) & $\mathbf{0.0094}$ & $0.0175$ & $0.0908$ & $\mathbf{0.0905}$ \\
\hline
MSE & $\mathbf{0.0006}$ & $0.0028$ & $\mathbf{0.0205}$ & $0.0205$ \\
\hline
KS distance & $0.0096$ & $\mathbf{0.0058}$ & $0.0289$ & $\mathbf{0.0285}$ \\
\hline
Log likelihood & $\mathbf{-2.0595}$ & $-2.0612$ & $-2.0727$ & $\mathbf{-2.0630}$ \\
\hline
\end{tabular}
\caption{Comparison of in-sample and out-of-sample performance on simulated data from GMM. MQ represents mixture quantile model. GMM represents Gaussian mixture model. All metrics are average values from folds in cross validation. MAE and MSE are distances between empirical quantile function and model quantile function.}
\label{table_gmm}
\end{table}

\begin{figure}[htbp]
    \centering
    \includegraphics[width=0.48\linewidth]{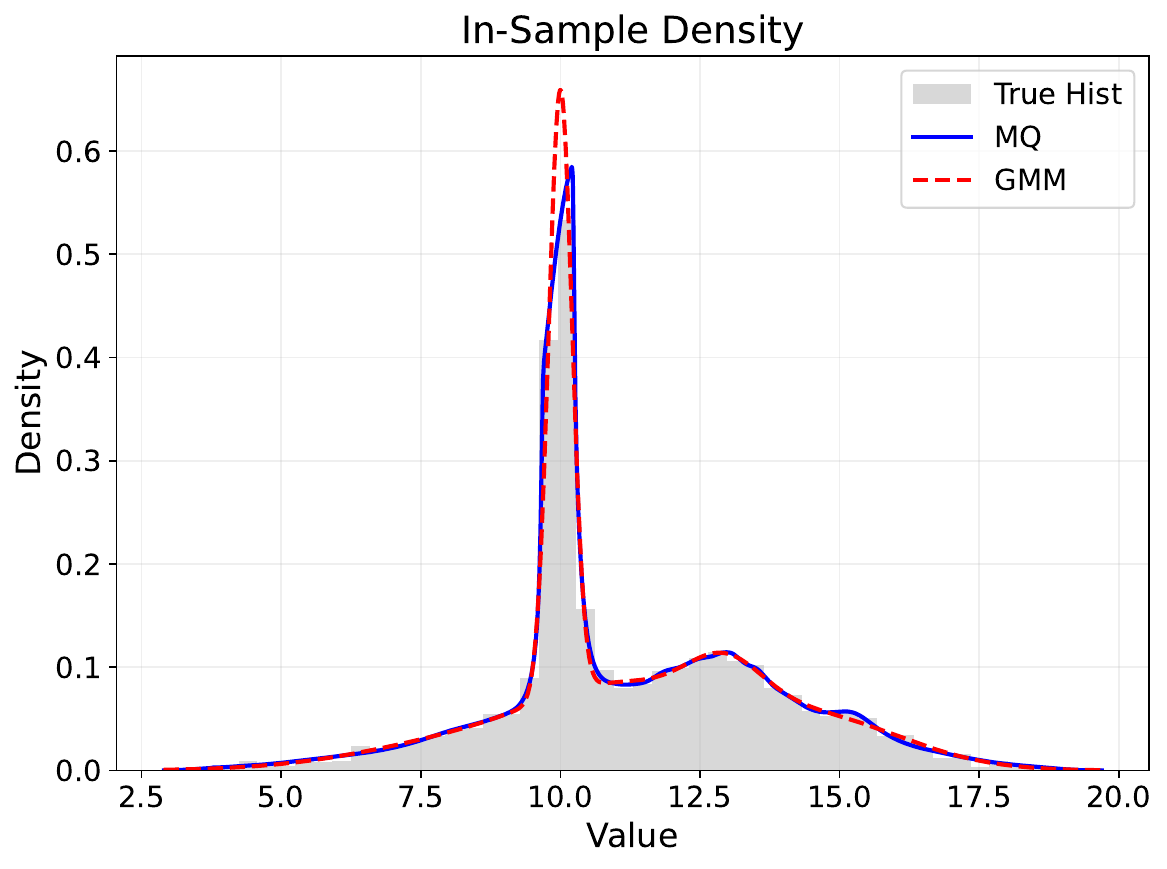}
    \hspace{3mm}
    \includegraphics[width=0.36\linewidth]{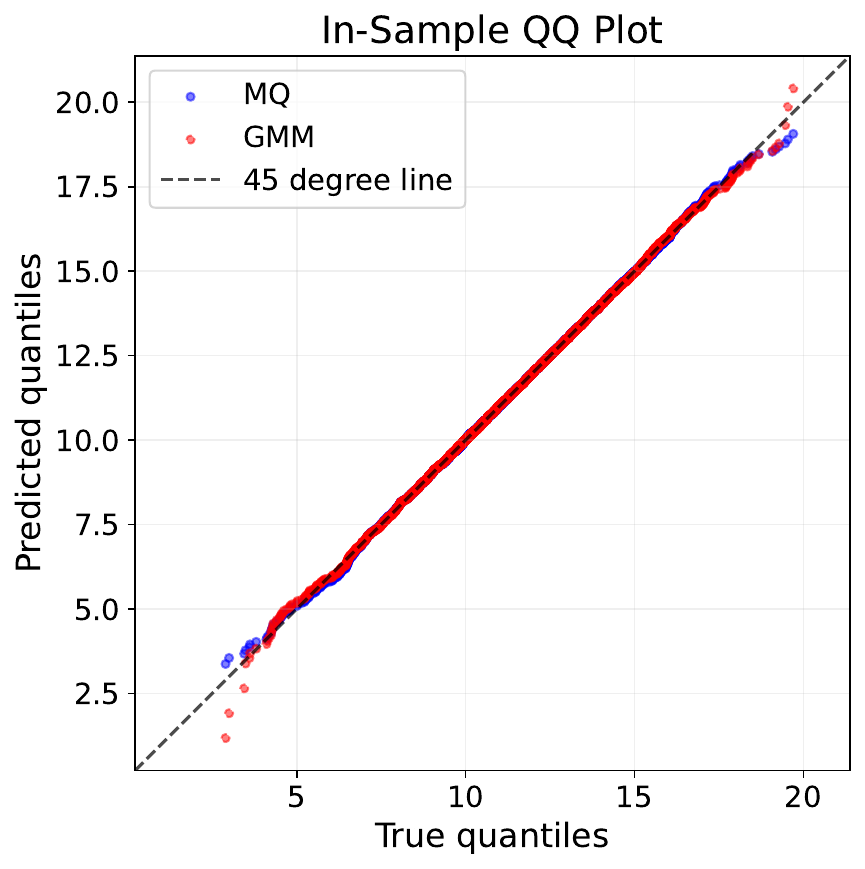}
    \caption{Comparison of density function and quantile function and emperical data in the last fold of cross validation. }
    \label{fig_gmm}
\end{figure}

\subsection{Body and Tail Fit on Electricity Price Dataset}\label{sec_electricity}
This section evaluates the ability of the proposed method to fit both the central body and the tail of real-world electricity price data, which exhibit a skewed body and a long right tail. In addition to a 10-component Gaussian mixture model as a benchmark, we also consider a conventional peak-over-threshold approach that fits a generalized Pareto distribution (GPD) to $5\%$ tail observations. Tail-specific metrics are used to compare the proposed mixture quantile model (MQ) with the GPD-based approach.

\paragraph{Data}
The dataset consists of daily U.S.\ electricity prices from 2001-01-01 to 2024-01-01, sourced from the U.S.\ Energy Information Administration \citep{king_electricity_prices_2024}.

\paragraph{Experimental setup}
We apply 5-fold cross-validation. The basis quantile functions for the mixture quantile model include:
\begin{itemize}
\item Standard Gaussian distribution
\item Student's $t$ distribution with degrees of freedom $1,5,10,30$
\item Generalized Pareto distribution with shape parameters $\xi = -1,-0.5,0,0.5,1$, each scaled and shifted to $[0.95,1]$ to model the upper tail
\item I-spline bases with 5 knots and order 3.
\end{itemize}
The $L_1$ penalty is applied only to basis quantile functions of known distributions, namely, Standard Gaussian, Student's t and generalized Peratio. The $L_1$ penalty is set to 100. The I-spline smoothness penalty is set to 100. The GMM benchmark is a 10-component Gaussian mixture.

\paragraph{Results}
Table \ref{table_electricity} summarizes averaged metrics. Key observations are as follows.
\begin{itemize}
\item \textbf{Computation time:} MQ is extremely efficient relative to the GMM, reflecting the convexity and small basis set of the regression formulation.
\item \textbf{Global fit:} GMM attains smaller MAE and higher log-likelihood, indicating a tighter fit in the central body of the distribution. However, MQ obtains substantially smaller MSE and markedly better tail metrics (0.95 and 0.99 quantile MAE/MSE and expected shortfall differences), demonstrating superior tail fit.
\item \textbf{Tail behavior:} The GMM fails to capture heavy tails, reflected by much larger tail MSEs and ES diffs. By contrast, the rich tail structure of MQ allows for capturing extreme quantiles considerably better, with out-of-sample improvements that are practically relevant for risk assessment. 
\end{itemize}

The Zipf plot in Figure \ref{fig_electricity} suggests that the GMM may have misallocated a component due to overfit, while MQ provides a more coherent fit across orders of magnitude. The density plot  in Figure \ref{fig_electricity}  shows both models tracking the main mass of the data. MQ exhibits a sharper drop in density near the tail threshold owing to the piecewise inclusion of the Pareto tail starting at $p=0.95$ (a kink in the quantile function).

\begin{table}[htbp]
\centering
\begin{tabular}{|l|c|c|c|c|c|c|}
\hline
 & \multicolumn{3}{c|}{In-sample} & \multicolumn{3}{c|}{Out-of-sample} \\
\cline{2-7}
Metric & MQ & GMM & GPD & MQ & GMM & GPD \\
\hline
Training time (s) & $\mathbf{0.0017}$ & $13.0884$ & $0.0425$ & $-$ & $-$ & $-$ \\
\hline
MAE (Wasserstein) & $\mathbf{0.0367}$ & $0.0394$ & $-$ & $\mathbf{0.0739}$ & $0.0874$ & $-$ \\
\hline
MSE & $\mathbf{0.0207}$ & $0.9220$ & $-$ & $\mathbf{0.1059}$ & $0.9238$ & $-$ \\
\hline
KS distance & $0.0054$ & $\mathbf{0.0027}$ & $-$ & $0.0094$ & $\mathbf{0.0082}$ & $-$ \\
\hline
Log likelihood & $-2.6777$ & $\mathbf{-2.6749}$ & $-$ & $-2.6783$ & $\mathbf{-2.6752}$ & $-$ \\
\hline
0.95-quantile MAE  & $\mathbf{0.3367}$ & $0.5889$ & $0.3662$ & $0.5680$ & $0.9142$ & $\mathbf{0.5427}$ \\
\hline
0.95-quantile MSE  & $\mathbf{0.3725}$ & $18.3778$ & $1.4345$ & $\mathbf{2.0047}$ & $18.3510$ & $2.2074$ \\
\hline
0.95-quantile diff & $0.4587$ & $0.0236$ & $\mathbf{0.0099}$ & $0.4606$ & $\mathbf{0.1535}$ & $0.1704$ \\
\hline
0.95-ES diff & $0.0609$ & $0.5337$ & $\mathbf{0.0097}$ & $\mathbf{0.3400}$ & $0.5753$ & $0.3403$ \\
\hline
0.99-quantile MAE  & $\mathbf{0.5999}$ & $2.7711$ & $0.9541$ & $\mathbf{0.9228}$ & $2.9905$ & $1.0437$ \\
\hline
0.99-quantile MSE  & $\mathbf{1.4556}$ & $91.8923$ & $6.8532$ & $\mathbf{8.6765}$ & $90.7654$ & $9.7824$ \\
\hline
0.99-quantile diff  & $\mathbf{0.0549}$ & $0.2886$ & $0.4588$ & $\mathbf{0.4693}$ & $0.7428$ & $0.5612$ \\
\hline
0.99-ES diff  & $0.3180$ & $2.6366$ & $\mathbf{0.0739}$ & $0.7081$ & $2.4927$ & $\mathbf{0.5237}$ \\
\hline
\end{tabular}
\caption{Comparison of in-sample and out-of-sample performance on U.S. electricity price data. MQ denotes the mixture quantile model; GMM denotes the Gaussian mixture model. Metrics are averaged across cross-validation folds. $\alpha$-quantile MAE/MSE refers to distances between empirical and model quantile functions restricted to $[ \alpha, 1]$. $\alpha$-quantile diff is the absolute difference between model and empirical $\alpha$-quantile. ES denotes expected shortfall. The generalized Pareto distribution (GPD) is fitted with $5\%$ tail observations, hence does not have global metrics.}
\label{table_electricity}
\end{table}

\begin{figure}[htbp]
    \centering
    \includegraphics[width=0.48\linewidth]{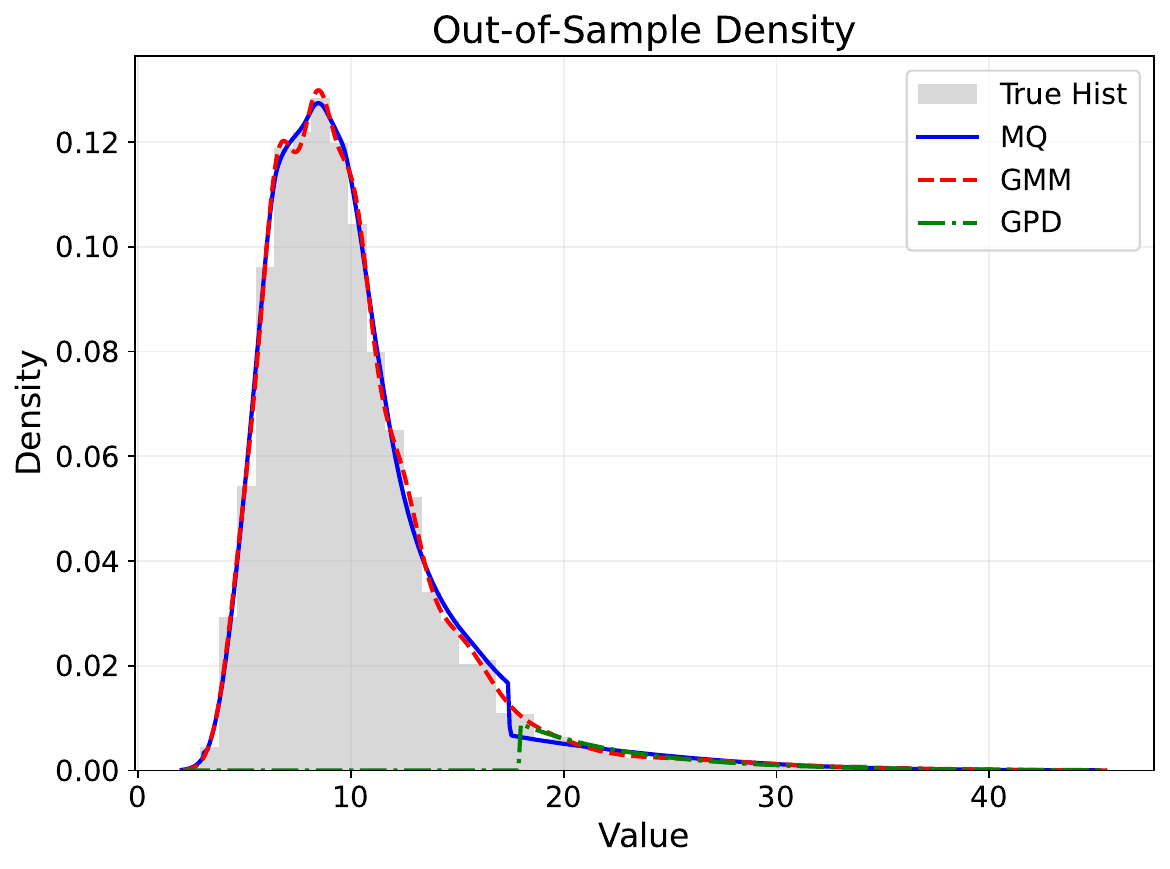}
    \hfill
        \includegraphics[width=0.48\linewidth]{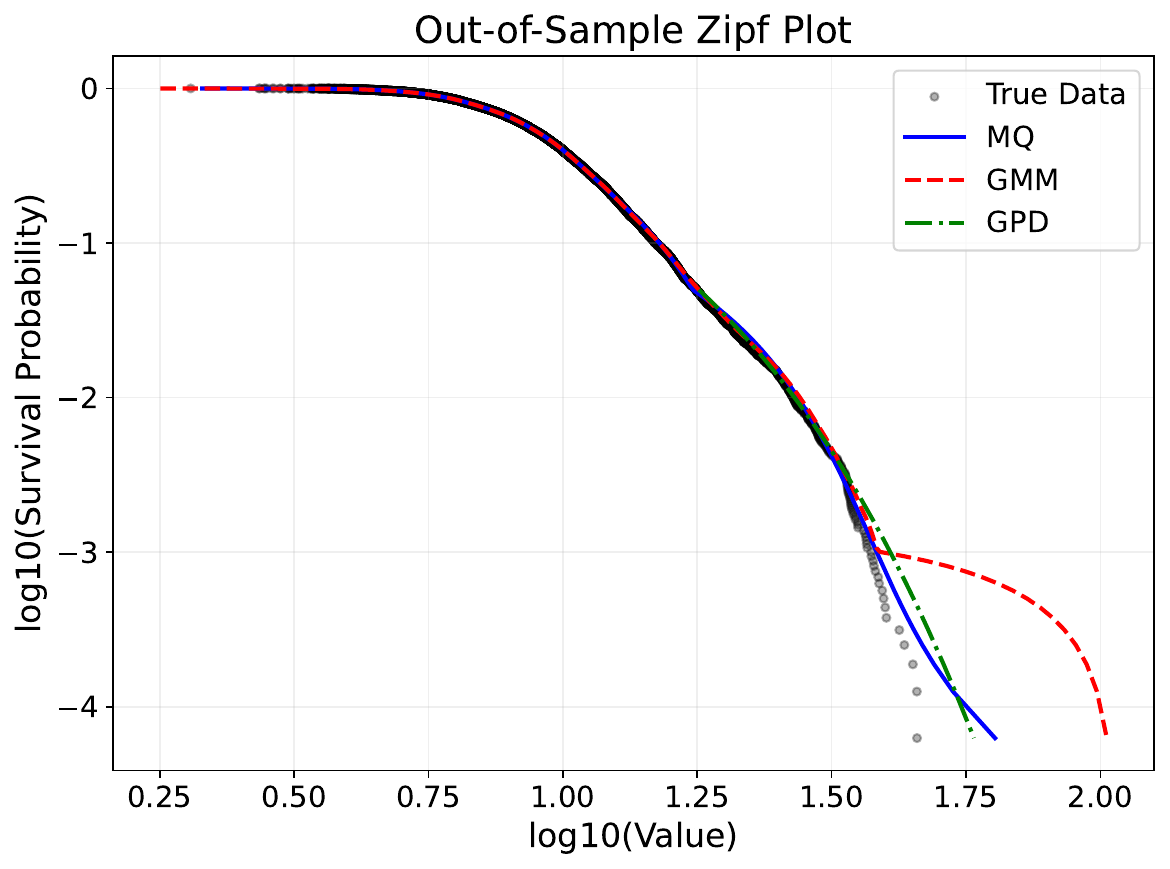}
    \caption{Comparison of density function and Zipf plot in the last fold of cross validation. }
    \label{fig_electricity}
\end{figure}

\subsection{Estimation of Mixture Quantiles of Generalized Beta Distributions with Financial Market Drawdowns}
\label{sec_dd}

This section compares the goodness-of-fit of basis quantile function selected via LASSO regression and a cardinality constraint. We fit the mixture quantiles of generalized beta distributions of the second kind to the log market drawdowns. This case study is inspired by \cite{Frey} which finds that the  lomax distribution has a good fit for market drawdowns.  Drawdowns are calculated with  40-year historical adjusted close price of the SP500 Index from 01/01/1982 to 01/01/2022. 

\paragraph{Financial Market Drawdowns}
Drawdown  financial characteristics is the  the security price drop from the previous highest value in percentage. Suppose we have a time series of security  price,  $p_t$, $t=1,\cdots,T$. The drawdown at time $t$ is defined by $d_t =  (\max_{i\in \{1,\cdots,t \}} p_i - p_t) / \max_{i\in \{1,\cdots,t \}} p_i$. The security is in a drawdown at time $t$, if $p_t > 0$. We call period $(t_1,t_2)$ a drawdown period  if ${d}_t > 0$ for $t_1<t<t_2$ and ${d}_{t_1} = {d}_{t_2} = 0$.   We take the maximum drawdown in each drawdown period.
SP500 Index has a total of $406$ drawdown periods from 01/01/1982 to 01/01/2022.  Then, we fit a model to 406 maximum log-tranformed drawdowns. 

\paragraph{Generalized Beta Distribution of the Second Kind}
The generalized beta distribution of the second kind is defined by the following density function $f(x,\theta_1,\theta_2,\theta_3,\theta_4) = \frac{|\theta_1|y^{\theta_1 \theta_3-1}}{b^{\theta_1 \theta_3}B(\theta_3,\theta_4)(1+(x/\theta_2)^{\theta_1})^{\theta_3+\theta_4}}$. It nests many common distributions such as the generalized gamma (GG), Burr type 3, lognormal, Weibull, Lomax, half-Student's $t$, exponential, power function, etc. Our approach can be regarded as estimating  mixture quantiles of these distributions. We first use the following grid to populate the design matrix. $\theta_2 \in \{0.1,0.2,\cdots,1\}$, $\theta_3 \in \{1,2,\cdots,10\}$, $\theta_4 \in \{1,2,\cdots,10\}$. The scale parameter $\theta_1$ is fixed at $1$. The quantile functions are then standardized as described in Section \ref{sec_calibration}. Then, to improve stability of  optimization, we remove extreme values:  quantile functions that either have $1/407$-quantile smaller than $10^{-4}$ or $406/407$-quantile larger than $10^{3}$. This is a reasonable choice since the scale of drawdowns varies from $10^{-2}$ to $10^2$. Those removed components is unlikely to be selected by the algorithm as a good fit, anyway. A total of $4792$ quantile functions are included in the initial mixture.

\paragraph{Optimization Setting}  
We minimize the error (see \eqref{statement}), 
with non-negativity constants (see Section \ref{sec_optimization}).
Additionally, we impose the
  cardinality constraints to  control the maximum number $C$ of non-zero values in $\bm{\theta}$ to mitigate overfitting (see Section \ref{sec_optimization}).  The value of the cardinality constraint is $C=1,2$.
As a comparison, we also use LASSO regression to select the quantile functions. The coefficient $\lambda$ in LASSO regression is determined as follows: $\lambda$ is increased with step $0.1$ until the cardinality (number of non-zero variables) equals $2$ and $1$. The smallest $\lambda$ that achieves the cardinality is reported.  We do not use the optimal weights formula \eqref{opt_weight} in this case study, because it assigns too small weights to extreme observations. Equal weights are assigned to all observations, as we are concerned about the tail risk.

\paragraph{Goodness-of-fit}
Table \ref{table_fit} presents various metrics to examine  goodness-of-fit. 
When the cardinality of the model is at least 2,  regressions outperform MLE in terms of WMSE and MAE. The error of the optimization with cardinality constraint 2 is only slightly higher compared to the optimization without cardinality constraint. The test statistics are significantly worse with cardinality 1. MLE outperforms other models in terms of KS and LLK. In most cases, the model fitted with cardinality constraint outperforms the model fitted with $L_1$ penalty. Visualization of the results are deferred to Appendix \ref{sec_qqplot}. 

\begin{table}[!htbp]
  \centering
  \renewcommand{\arraystretch}{1.2}
  \begin{tabular}{|p{1.1cm}|c|c|c|c|c|c|c|c|c|c|c|}
    \hline
    \multirow{2}{4cm}{} & \multicolumn{5}{c|}{\textbf{Least squares regression}} & \multicolumn{5}{c|}{\textbf{Least absolute deviation regression}}  & \textbf{MLE}\\
 
    \cline{1-12}
     $\bm{C}$ &  N/A & $\textbf{2}$ & $\textbf{1}$ & N/A &N/A & N/A & $\textbf{2}$ & $\textbf{1}$ & N/A & N/A & N/A \\
    \cline{1-12}
     $\bm{\lambda}$ & N/A & N/A & N/A & \textbf{0.6} & \textbf{1.2} & N/A & N/A & N/A & \textbf{1.1} & \textbf{1.9} & N/A \\
    \hline
    WMSE & $0.21$ & $0.21$ & $1.04$ & $0.25$ & $1.88$ & $0.31$ & $0.31$ & $0.28$ & $0.70$ & $3.07$ & $0.40$ \\ 
MAE & $0.14$ & $0.16$ & $0.34$ & $0.24$ & $0.78$ & $0.10$ & $0.10$ & $0.26$ & $0.38$ & $0.74$ & $0.15$ \\ 
KS & $0.10$ & $0.13$ & $0.20$ & $0.22$ & $0.57$ & $0.05$ & $0.08$ & $0.15$ & $0.18$ & $0.56$ & $0.03$ \\ 
LLK & $$-$381.0$ & $$-$386.3$ & $$-$439.4$ & $$-$585.2$ & $$-$737.4$ & $$-$389.2$ & $$-$400.6$ & $$-$414.6$ & $$-$518.5$ & $$-$720.5$ & $ $-$345.4$ \\    \hline
  \end{tabular}
  \caption{The table presents the measures of goodness-of-fit for models with different errors, constraints and penalties. MLE is included for comparison. $C = $ value of cardinality constraint, $\lambda =$ coefficient of $L_1$ penalty in LASSO regression. $\lambda$ is increased with step $0.1$ until the cardinality (number of non-zero variables) equals $2$ and $1$. 
  LLK = log likelihood.}
  \label{table_fit}
\end{table}

\section{Conclusion}\label{sec_conclusion}

We propose a flexible family of quantile function models estimated via constrained linear regression, leading to a convex optimization problem that is computationally efficient. The resulting estimator is shown to be asymptotically equivalent to a minimum  $q$-Wasserstein estimator and to be asymptotically normal under the stated regularity conditions. Practical regularization and constraints, including nonnegativity constraints, $L_1$, 
 penalties, cardinality constraints, and P-spline smoothing, can be conveniently incorporated to enhance finite-sample performance. In numerical experiments, the proposed model achieves superior fit in both the body and tail of the quantile function while exhibiting substantially faster computation than benchmark methods. The framework naturally extends to other distributional functionals that emphasize tail risk, such as CVaR and expectiles, which we leave for future research.

\section*{Statements and Declarations}

\textbf{Funding:}
The authors did not receive support from any organization for the submitted work.

\textbf{Conflict of Interest:}
The authors have no relevant financial or non-financial interests to disclose.

\clearpage

\bibliographystyle{chicago}
\bibliography{mixture_quantiles_reference}{}

\appendix

\section{Proof of Proposition \ref{prop_consistent}}\label{sec_proof_prop_consistent}
\begin{proof}
 $\mathcal{E}$ is a convex function with respect to $\bm{\theta}$. $\forall \bm{\theta} \in \bm{\Theta}$,  $f_N(\bm{\theta}) \overset{p}{\rightarrow} f(\bm{\theta})$ due to \eqref{converge_wass}. 
The rest is  proved by applying Theorem 2.7 of \cite{Newey2}. 
If the model is correctly specified,  we have $\widehat{\bm{\theta}}_N \overset{p}{\rightarrow} \widehat{\bm{\theta}}^*$. 
\end{proof}

\section{Proof of Proposition \ref{prop_integral_wass}}\label{sec_proof_prop_integral_wass}
\begin{proof}
We can view the  function $f_N(\bm{\theta})$ in the limit as a Monte Carlo integration. The convergence is derived by standard argument \citep{robert1999montecarlo}
\begin{equation}\label{converge_wass}
 \left( \frac{1}{N} \sum_{n=1}^N w_n \mathcal{E}_q(y_n - \bm{x}_n \bm{\theta}) \right)^{\frac{1}{q}} 
\overset{p}{\rightarrow}
 \left( \int_0^1 \mathcal{E}_q(Q(p) - G(p,\bm{\theta})) w(p) \mathrm{d}p \right)^{\frac{1}{q}} \;.
\end{equation}

\end{proof}

\section{Proof of Proposition \ref{prop_wass}}\label{sec_proof_prop_wass}
\begin{proof}
Since the model is correctly specified, we have $Q(p) = \sum_{i=0}^I \theta_{i}^* Q_i(p)$. Thus $Q(p)-G(\widehat{\bm{\theta}},p)
=
 \sum_{i=0}^I  (\theta_{i}^* - \widehat{\theta}_{Ni}) Q_i(p)
 $. 
We have
\begin{equation}
\begin{aligned}
\mathcal{W}(Q(p),G(\widehat{\bm{\theta}},p))) = &
\left( \int_0^1 \mathcal{E}_q(Q(p) - G(\widehat{\bm{\theta}},p)) w(p) \mathrm{d}p  \right)^{\frac{1}{q}} \\
 = &
 \left(  \int_0^1 \mathcal{E}_q\left( \sum_{i=0}^I  (\theta_{i}^* - \widehat{\theta}_{Ni}) Q_i(p)  \right) w(p) \mathrm{d}p  \right)^{\frac{1}{q}}
  \\ \leq &
  \sum_{i=0}^I |\theta_{i}^* - \widehat{\theta}_{Ni}| \left(  \int_0^1 \mathcal{E}_q\left(    Q_i(p)  \right) w(p) \mathrm{d}p \right)^{\frac{1}{q}}
  \\ \leq &
  M ||\bm{\theta}^{*} - \widehat{\bm{\theta}}_N||_1 \;.
\end{aligned}
\end{equation}
The inequality is due to subadditivity of  $\mathcal{E}$. 
By Proposition \ref{prop_consistent}, we have that $\mathcal{W}(Q(p),G(\widehat{\bm{\theta}},p)))  \overset{p}{\rightarrow} 0$. 
\end{proof}

\section{Proof of Proposition \ref{prop_orderstat}}\label{sec_proof_prop_orderstat}

\begin{proof} 
The difference between the sample quantiles and true quantiles asymptotically follows the normal distribution \citep{arnold2008order}, 
\begin{equation}\label{asymp_order}
\sqrt{N}
\left(
\bm{Y}_N - \bm{Y}^* 
\right)
\overset{d}{\rightarrow} 
\mathcal{N}(\bm{0},\bm{C}) \;.
\end{equation}
The solution to the estimation problem \eqref{statement} has a closed-form expression
\begin{equation} \label{closedform_inproof}
\widehat{\bm{\theta}}_N  = (\bm{X}'\bm{W}\bm{X})^{-1}\bm{X}'\bm{W}\bm{Y}_N \;.
\end{equation}
 Let $\bm{\epsilon}_N = \bm{Y}_N - \widehat{\bm{Y}}$. 
  Substituting $\bm{Y}$ with $\bm{X}\bm{\theta}^* + \bm{\epsilon}$ leads to
\begin{equation}
\widehat{\bm{\theta}}_N
= 
 (\bm{X}'\bm{W}\bm{X})^{-1}\bm{X}'\bm{W}\bm{X}\bm{\theta}^{*}
 +
 (\bm{X}'\bm{W}\bm{X})^{-1} \bm{X}'\bm{W} \bm{\epsilon}_N \;.
 \end{equation} 
 Thus 
 \begin{equation}
 \sqrt{N}
(
\widehat{\bm{\theta}}_N - \bm{\theta}^*
) = 
(\bm{X}'\bm{W}\bm{X})^{-1} \bm{X}'\bm{W}  
 \sqrt{N} \bm{\epsilon}_N \;.
 \end{equation}
With \eqref{asymp_order}, we have
\begin{equation}
\sqrt{N}
(
\widehat{\bm{\theta}}_N - \bm{\theta}^*
)
\overset{d}{\rightarrow} 
\mathcal{N}(\bm{0},\bm{H}) \;,
\end{equation}
where
\begin{equation}\label{asymp_cov}
\bm{H}
=
(\bm{X}'\bm{W}\bm{X})^{-1} \bm{X}'\bm{W} \bm{C} \bm{W}' \bm{X} (\bm{X}'\bm{W}'\bm{X})^{-1}  \;.
\end{equation}
By the generalized Gauss-Markov Theorem \citep{GeneralizedGMT1936Aitken}, the best linear unbiased estimator is obtained by the following weight matrix
\begin{equation}\label{optimal_weight}
\bm{W} = \bm{C}^{-1} \;.
\end{equation}
Substituting $\bm{W}$ with $ \bm{C}^{-1}$ in \eqref{asymp_cov} gives the optimal estimator in \eqref{opt_estimator}.
\end{proof}

\section{Constraints, Penalties and Error Function in Optimization} \label{sec_optimization}
This section contains further discussion on useful constraints and regularizations for the regression problem.

\subsection{Constraints}

This subsection lists some important constraints to be used in the regression.

\paragraph{Cardinality Constraint}
The cardinality constraint controls the maximum number $C$ of non-zero values in $\bm{\theta}$ to mitigate overfitting. The constraint is defined with the cardinality function,
\begin{equation}
Card(\bm{\theta})  = \sum_{i=1}^I \mathbf{1}_{\{\theta_i \neq 0\}} \leq C \;,
\end{equation}
where $\mathbf{1}_{\{\cdot\}}$ is the indicator function that equals 0 when the statement in the bracket is true and 0, if false.
The cardinality constraint can be transformed to a mixed integer linear system of constraints by introducing auxiliary integer variables $\bm{v}=(v_1,\cdots,v_I)$
\begin{equation}
\begin{aligned}
\sum_{i=1}^I v_i &\leq C \;, \\
l_i v_i &\leq \theta_i \leq u_i v_i \;, \\
v_i &\in \{ 0,1 \} \;,
\end{aligned}
\end{equation}
where $u_i$ and $l_i$ are upper and lower bounds for $\theta_i$.
 Mixed integer linear programming can be effectively solved using powerful solvers, such as GUROBI. The cardinality constraint allows for including a large number of quantile functions in the initial mixture.

\paragraph{VaR and CVaR Constraints}

Since we assign small weights to tail observations, the fitted distribution could underestimate the heaviness of tail. By imposing constraints on the tail behavior, we can improve the representation of the distribution in the tails. In some practical applications tail performance is well studied (for instance, in finance applications).
The quantile is also called Value-at-Risk (VaR) in finance. 

 CVaR is a popular risk measure  in finance and engineering that accounts for the average of the tail of a distribution \citep{cvar1,cvar2}. 
The $p$-CVaR of a random variable with quantile function $Q(p)$ is defined by $\bar{Q}(p) = \frac{1}{1-p}\int_p^1 Q(p)\mathrm{d}p$. Equivalently, CVaR for a continuous distribution is the average outcome exceeding the $p$-quantile and thus contains more information on the tail than  $p$-quantile. From the definition we observe that the $p$-CVaR of $G(p,\bm{\theta})$ is  linear with respect to parameters $\bm{\theta}$.

The $p$-quantile constraint is defined by
\begin{equation}
\sum_{i=0}^I \theta_i Q_i(p) \leq E \;,
\end{equation}
and the $p$-CVaR constraint is defined by
 \begin{equation}
\sum_{i=0}^I \theta_i \bar{Q}_i(p) \leq F \;,
\end{equation}
where $E, F$ are constants. The direction of the inequality sign indicates that we do not want to underestimate the upper tail risk, but similar we can control the lower tail as well.

\paragraph{L-moment Constraints} 
The L-moments are defined by a linear combination of order statistics. These moments can characterize a distribution \citep{Hosking}, similar to the moments of a distribution, but are more robust to outliers.
With a minor abuse of notation, denote by $L_m(Q)$ the  $m$-th L-moment of a sample set $Q$  of a random variable whose quantile function is $Q$, and by $\bm{L}_m$  the vector of L-moments $(L_1,\cdots,L_m)'$.

\cite{Hosking} gives the analytical expression of L-moment $L_m(Q) = \int_0^1 Q(p)F_m(p)\mathrm{d}p$ where $F_m(p)$ is a polynomial. We observe that the L-moment of a sum of quantile function of $G(p,\bm{\theta})$ is linear with respect to $\bm{\theta}$
\begin{equation}
L_m(G(\bm{\theta})) = L_m( \sum_{i=0}^I  \theta_i Q_i ) = \sum_{i=0}^I  \theta_i L_m( Q_i ) 
\end{equation}
where $L_m( Q_0 )=1$.

L-moment fitting in \cite{Karvanen}, analogous to method of moments, matches the L-moments of data and the model.
For our estimation problem, we propose a constraint that bounds the $L_1$ error of L-moments of fitted distribution from the empirical $L$-moments as follows
\begin{equation}
	\label{27}
||(\bm{L}_m ( G(\bm{\theta})) - \bm{L}_m(\bm{Y})||_1 \leq \epsilon \;.
\end{equation}
L-moment fitting is a special case of this procedure, since the feasible set shrinks to the optimal solution of L-moment fitting as $\epsilon \rightarrow 0$.
The constraint 	\eqref{27} is equivalent to the following system of linear constraints with auxiliary variables $ u_1,\cdots,u_m$ ,
\begin{equation}
\begin{aligned}
\sum_{k=1}^m u_i &\leq \epsilon  \;,  \\
u_k & \geq \sum_{i=0}^I  \theta_i L_k( Q_i )  - {L}_k(\bm{Y})  , \quad k = 1, \cdots,m \;,\\
u_k & \geq -\sum_{i=0}^I  \theta_i L_k( Q_i )  + {L}_k(\bm{Y})  , \quad k = 1, \cdots,m \;.
\end{aligned}
\end{equation}


\subsection{Error Function and Penalty}

While this paper mainly considers weighted mean squared error and mean absolute error, other error functions for $\bm{\mathcal{E}}$ can be chosen, as well. For example, the robust Huber error \citep{huber1964robust}  combining mean squared  and mean absolute errors can be useful for handling outliers.

\section{Q-Q Plots of Experiment in Section \ref{sec_dd}}\label{sec_qqplot}

Figure \ref{qqplot_ls} and \ref{qqplot_lad} present the Q-Q plot of models fitted with different errors, constraints and penalties. MLE is included for comparison. 
The following three regressions provide close results and have good fit: 1) weighted least squares regression; 2) weighted least squares regression  with cardinality constraint $C=2$; 3) weighted least squares regression with Lasso penalty coefficient $\lambda = 0.6$.  These models have a better fit to the tail observations than MLE. Figures \ref{plot_qq_LS_MLE},   \ref{plot_qq_LAD_MLE}  for MLE show that extreme tail observations are all  below the straight line $y=x$, i.e., MLE underestimates the tail risk.  When cardinality equals $1$, neither of the regression with cardinality constraint or  with $L_1$ penalty has a good fit. 
For least absolute deviation regression, 
except for the regression with coefficient of $L_1$ penalty $\lambda = 1.9$, all models  are close and  have good fit. They have better fit to the tail observations than MLE. 

\begin{figure}[htbp] 
\centering
  	\subfloat[ Weighted least squares]{%
    \includegraphics[width=.32\textwidth]{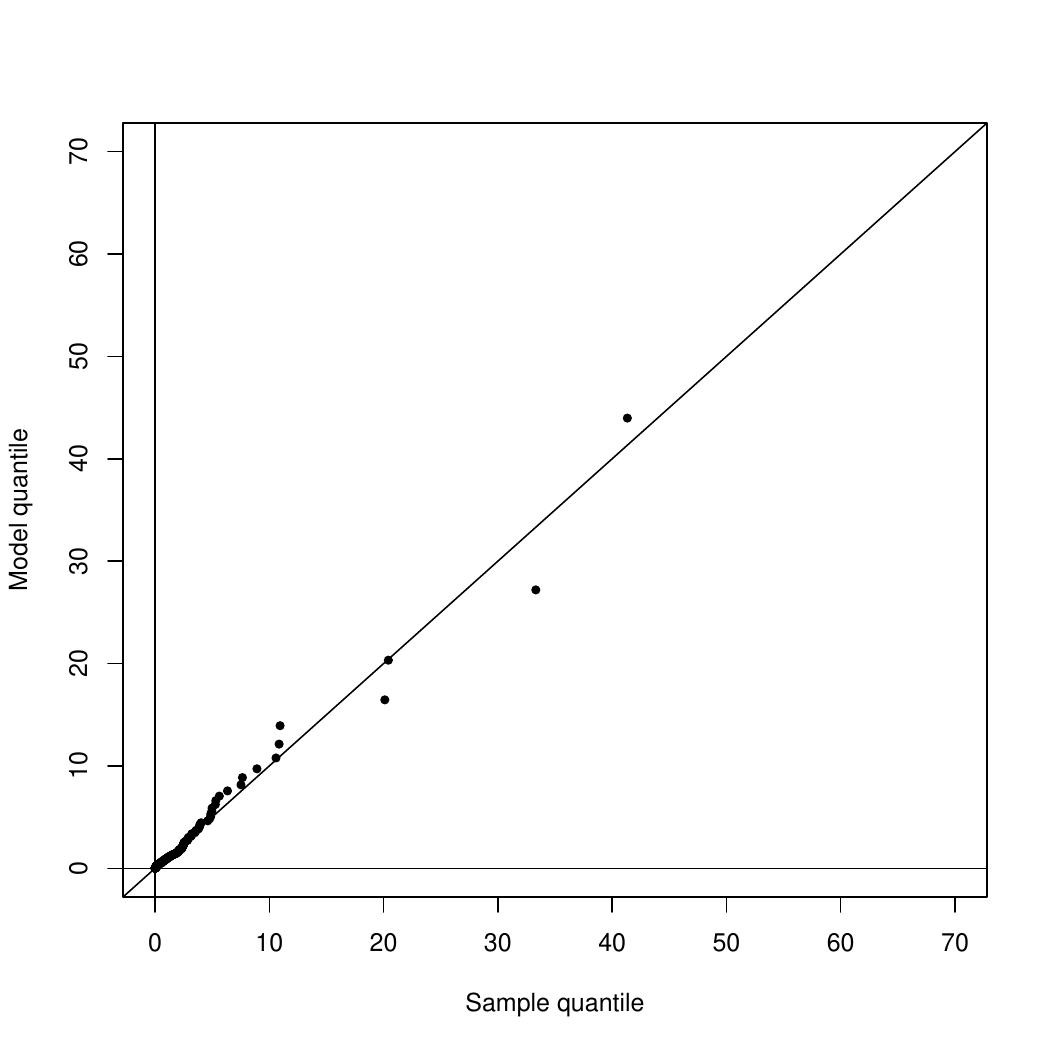}}
   	\hspace{0.12cm}
    \subfloat[ MLE]{%
    \includegraphics[width=.32\textwidth]{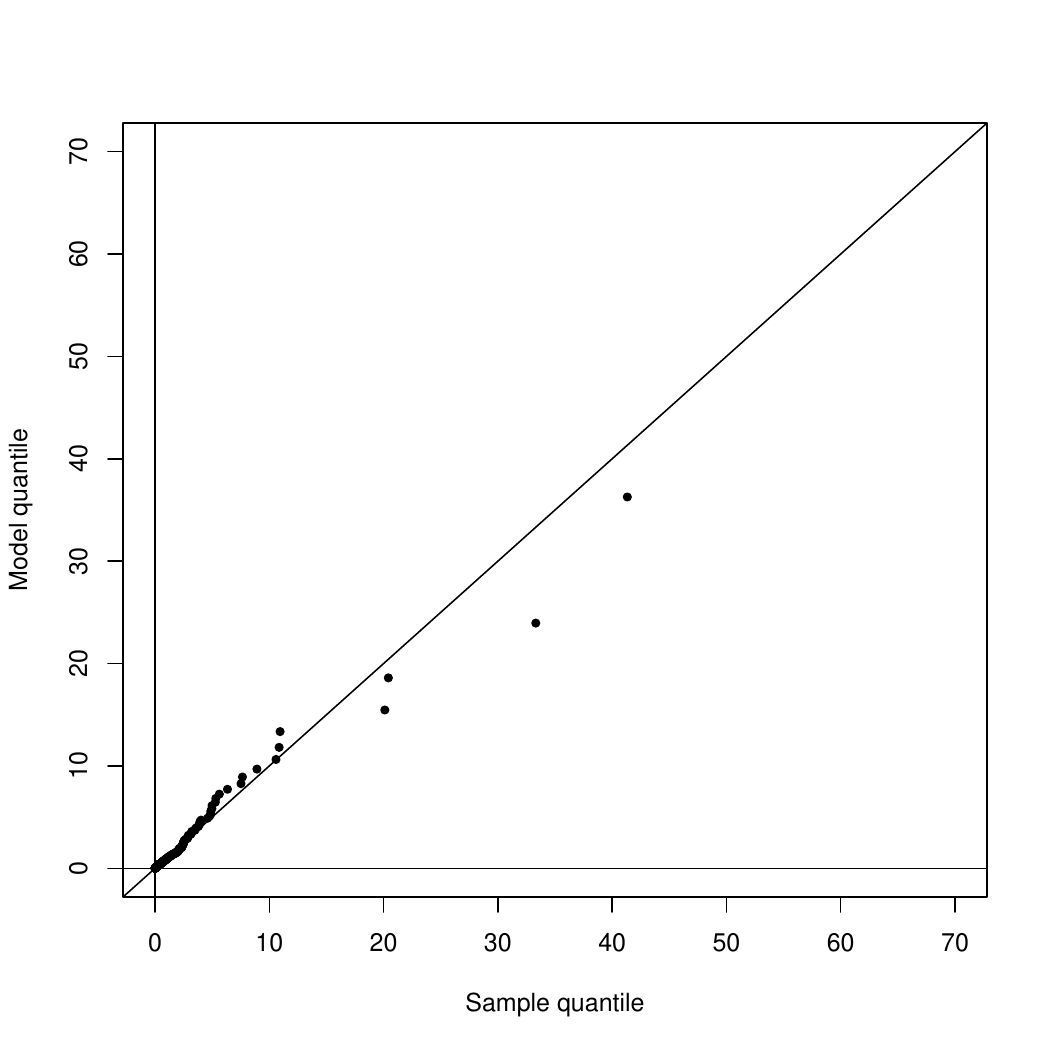}\label{plot_qq_LS_MLE}  }
    \hspace{0.12cm}
    \subfloat[ Cardinality constraint $C=2$]{%
    \includegraphics[width=.32\textwidth]{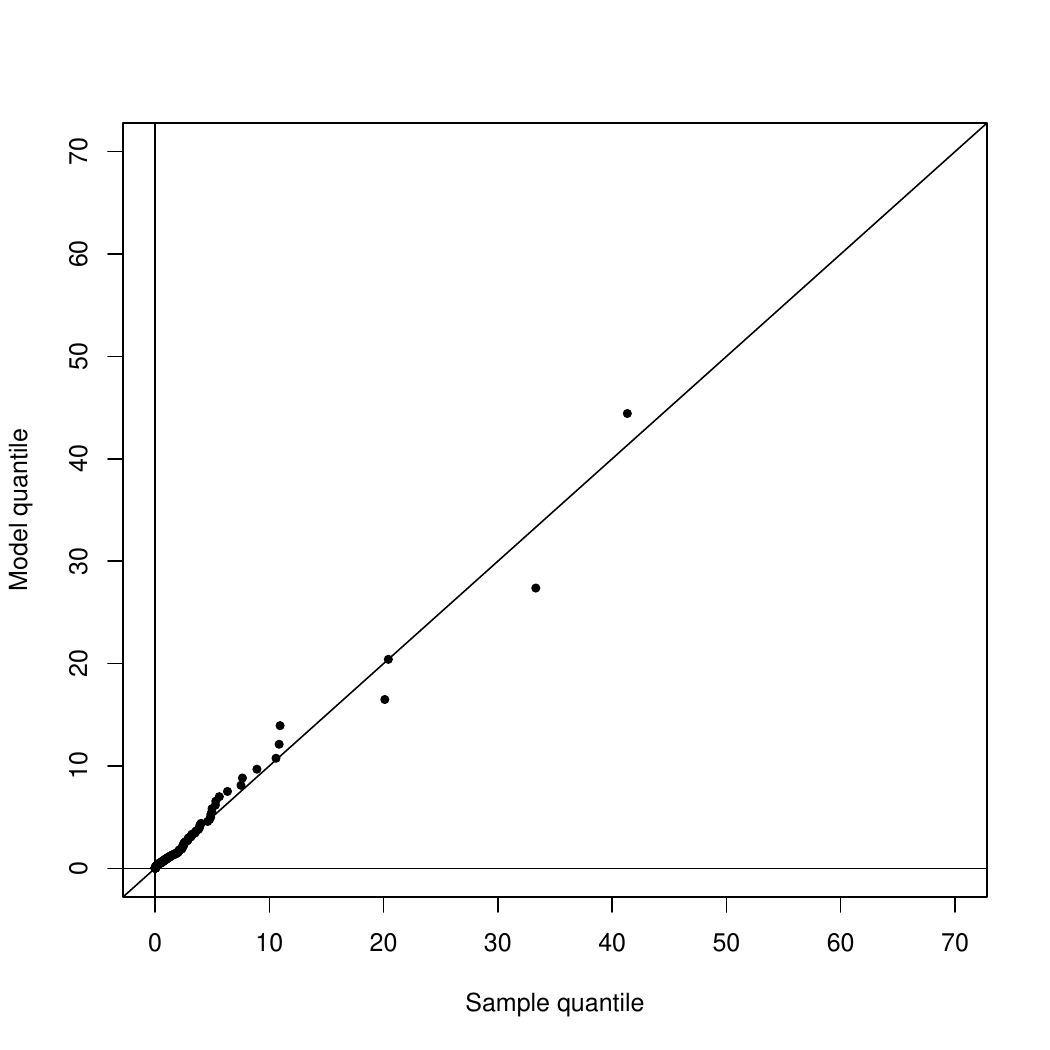}}
    \\
    \hspace{0.12cm}
    \subfloat[ Cardinality constraint $C=1$]{%
    \includegraphics[width=.32\textwidth]{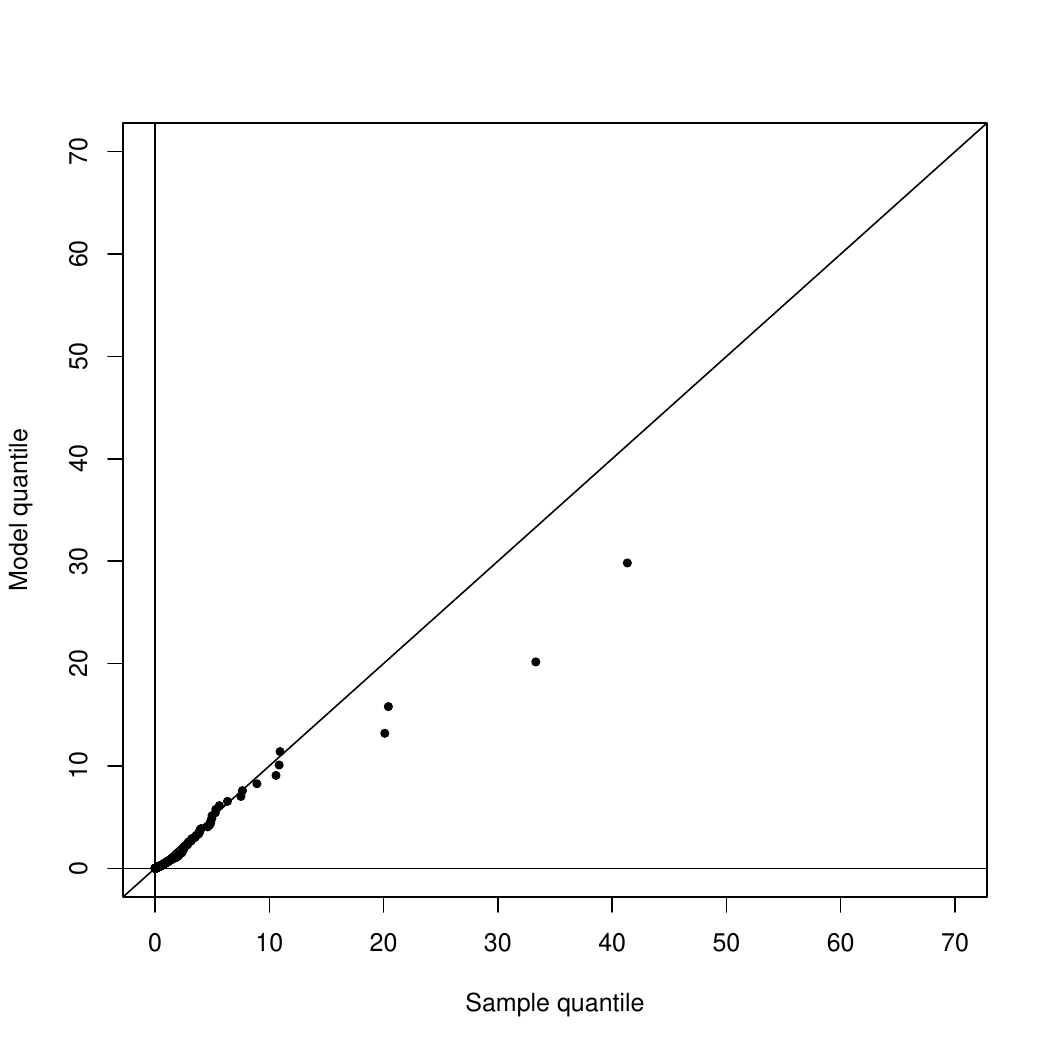}}
    \hspace{0.12cm}
    \subfloat[ LASSO $\lambda = 0.6$]{%
    \includegraphics[width=.32\textwidth]{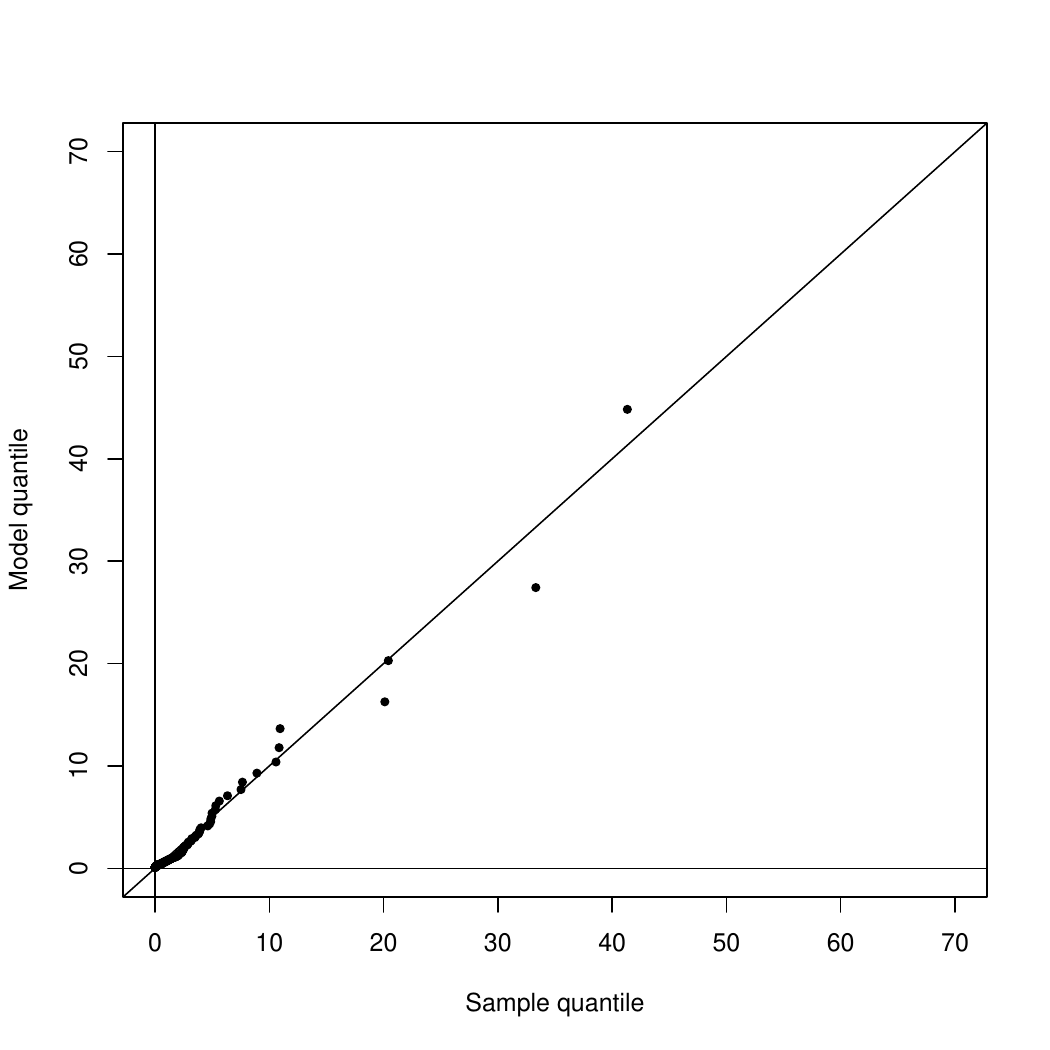}}
    \hspace{0.12cm}
    \subfloat[ LASSO $\lambda = 1.2$]{%
    \includegraphics[width=.32\textwidth]{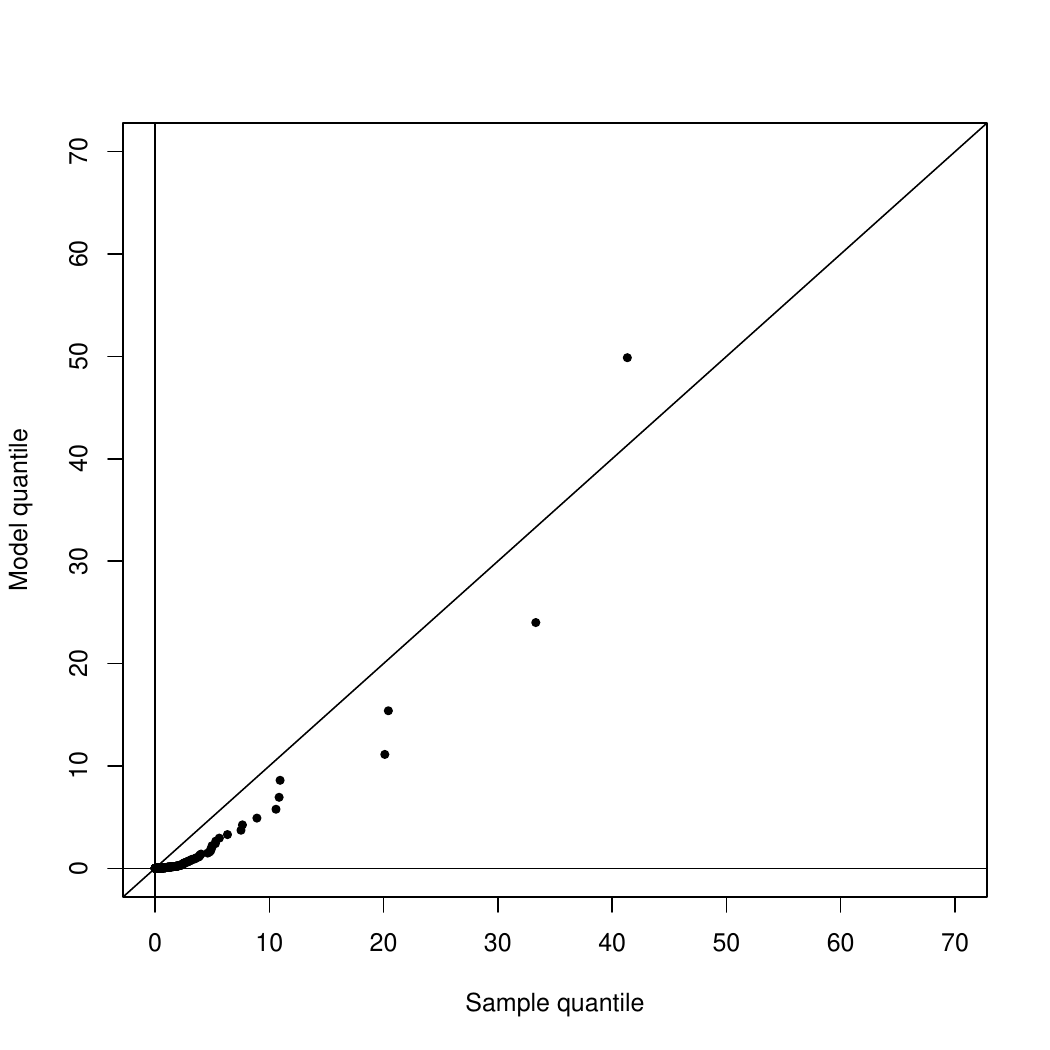}}
 
  \caption{Q-Q plots of models fitted by least squares regression with  cardinality constraint $C=1,2$ and coefficient $\lambda=0.6,12$ of $L_1$ penalty. MLE is included as a benchmark.  $\{(x_n,y_n)\}_{n=1}^N$ = 
 black points,   	
  $x_n=$ $n$-th sample order statistics, $y_n=$ quantile with  confidence level $\frac{n}{N+1}$ of the model. }
  \label{qqplot_ls}
\end{figure}

\begin{figure}[!htbp]
\centering
  	\subfloat[ Least absolute deviation]{%
    \includegraphics[width=.32\textwidth]{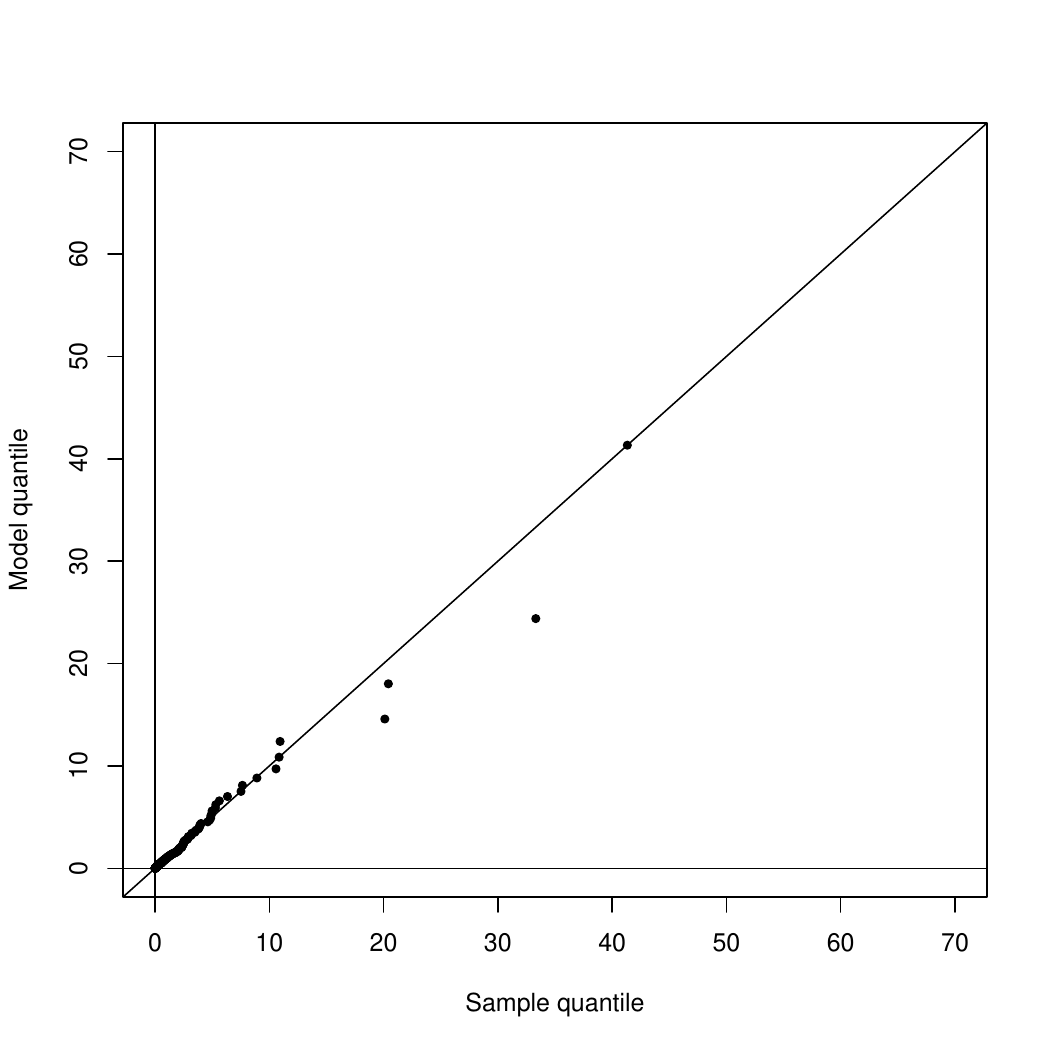}}
   	\hspace{0.12cm}
    \subfloat[ MLE]{%
    \includegraphics[width=.32\textwidth]{qq_plot_11.pdf}\label{plot_qq_LAD_MLE} }  
    \hspace{0.12cm}
    \subfloat[ Cardinality constraint $C=2$]{%
    \includegraphics[width=.32\textwidth]{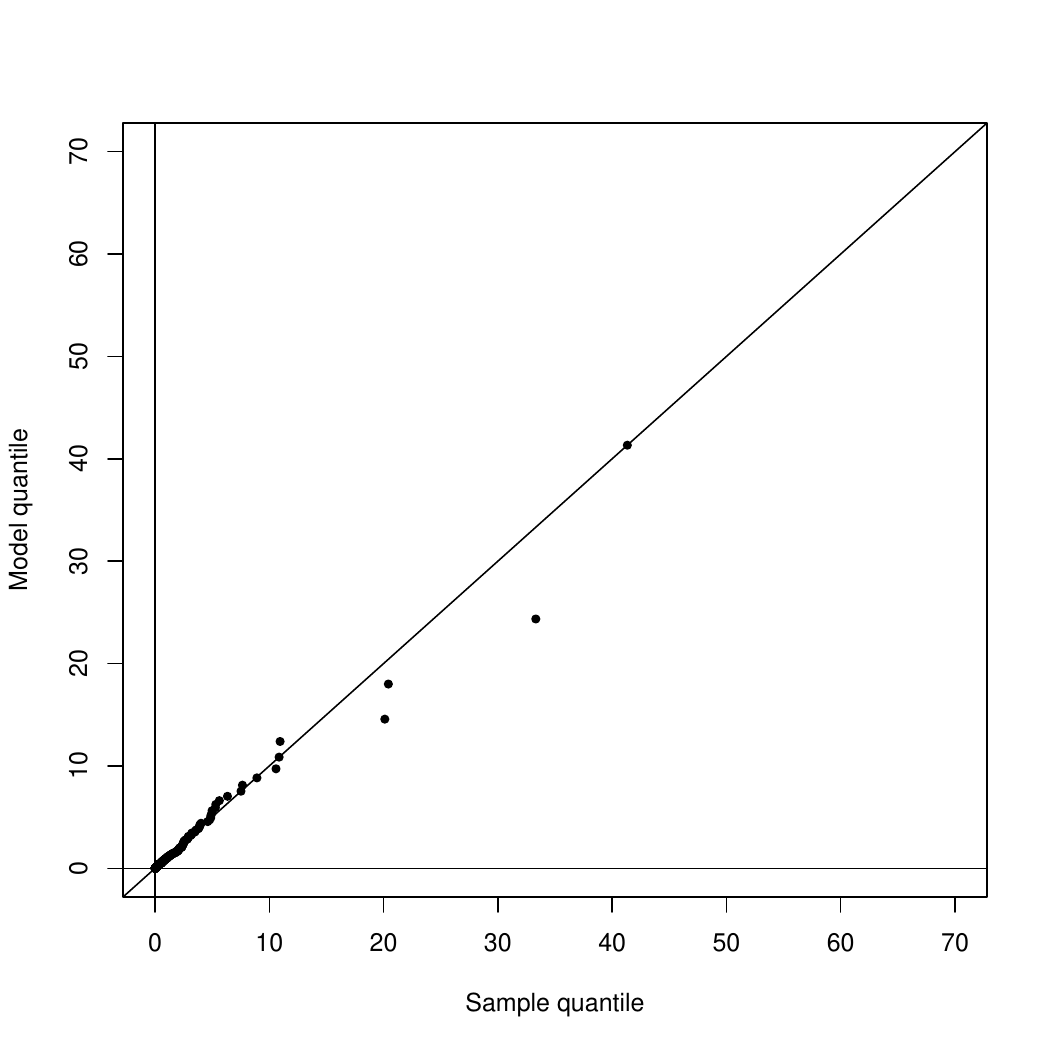}}
    \\
    \subfloat[ Cardinality constraint $C=1$]{%
    \includegraphics[width=.32\textwidth]{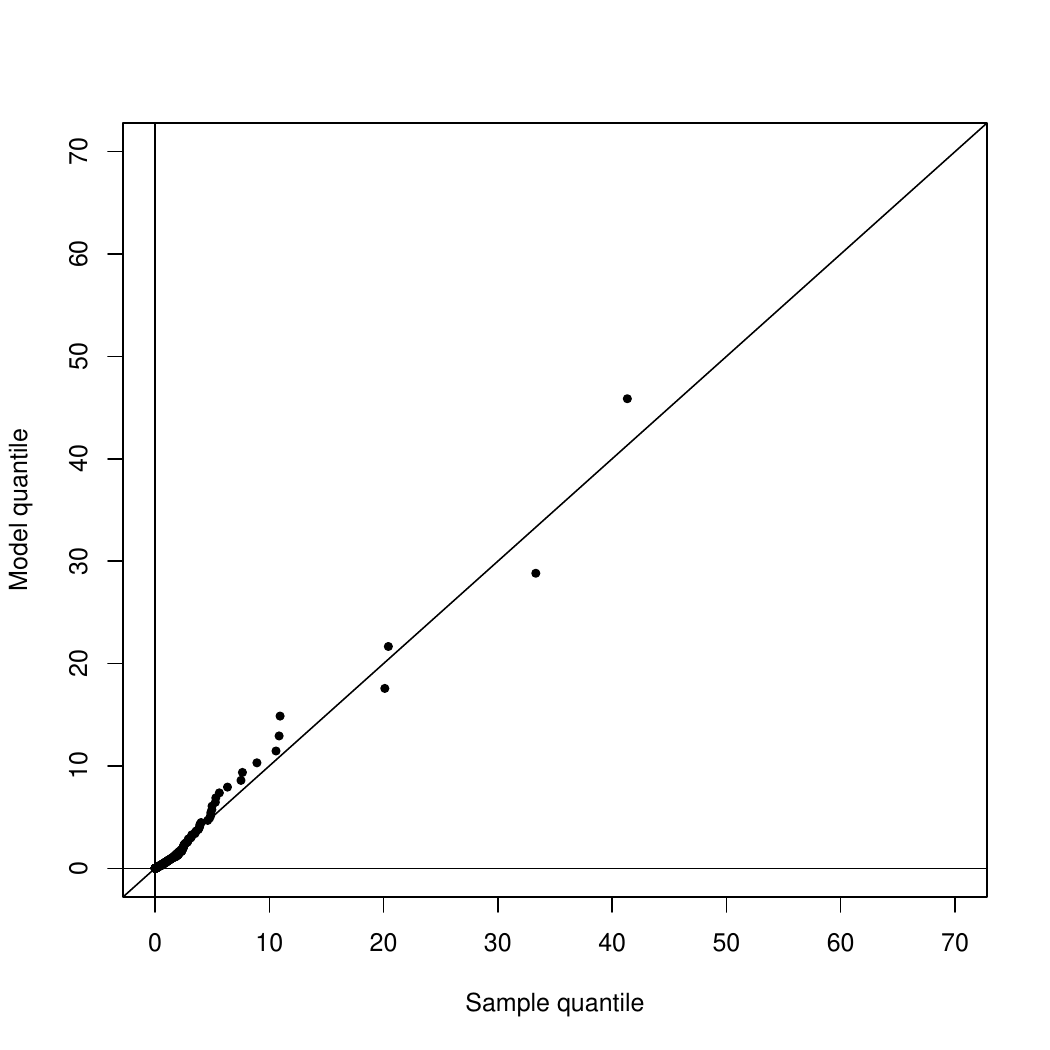}}
    \hspace{0.12cm}
    \subfloat[ LASSO $\lambda = 1.1$]{%
    \includegraphics[width=.32\textwidth]{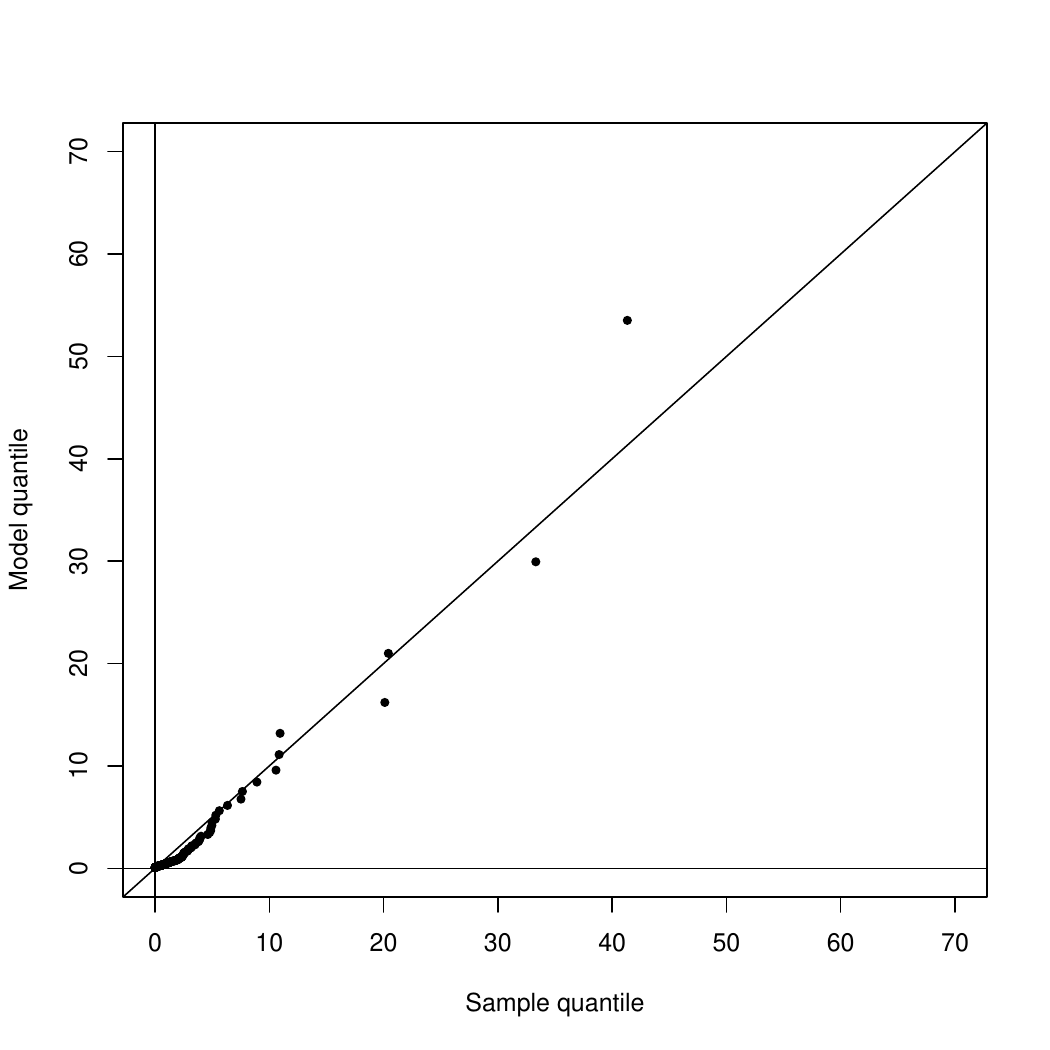}}
    \hspace{0.12cm}
    \subfloat[ LASSO $\lambda = 1.9$]{%
    \includegraphics[width=.32\textwidth]{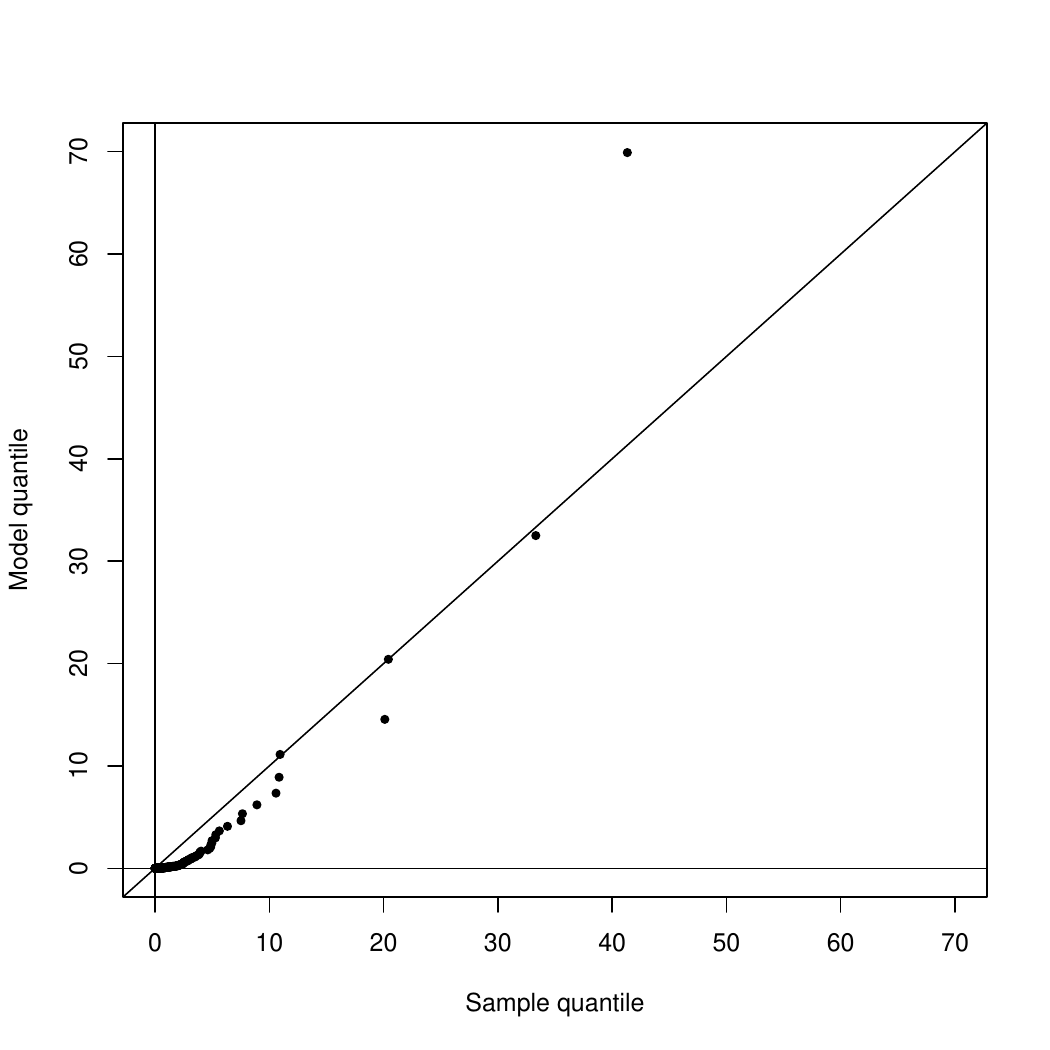}}
 
  \caption{Q-Q plots of models fitted by least absolute deviation regression with  cardinality constraint $C=1,2$ and coefficient $\lambda=1.1,1.9$ of $L_1$ penalty.   	  	
  MLE is included as a benchmark.  $\{(x_n,y_n)\}_{n=1}^N$ = 
  black points,   	
  $x_n=$ $n$-th sample order statistics, $y_n=$ quantile with  confidence level $\frac{n}{N+1}$ of the model.}
  \label{qqplot_lad}
\end{figure}

\section{Simulation and Analysis of  Convergence}\label{sec_sim_convergence}

This section studies the convergence of Wasserstein distance bewteen the mixture quantile model and the true quantile function with a simulated dataset.
We fit the model to the sample data-set generated by a mixture of skewed $t$ distributions.
We calculated distance between the quantile functions of the underlying data-generating model and of the fitted  model. Section \ref{sec_minimum_distance} shows that the objective value of \eqref{statement} converges to the $p$-Wasserstein distance. 
We show with graphs that as the sample size $N$ increases, the distance between the model quantile function and the true quantile function tends to zero, and that the objective value tends to zero. 

\cite{Fernandez} modifies skewed $t$ distribution by scaling the two sides of the density function differently. Similarly, we modify skewed $t$ distribution by scaling differently the positive and negative sides of the quantile function. The quantile function of the skewed $t$ distribution is defined by 
\begin{equation}\label{skewt}
 Q_{\gamma,\nu}^{st}(p) = 
 \begin{cases} 
        \frac{1}{\gamma R} Q_\nu^t(p)  \;,
      &  p  \leq 0.5  \\
        \frac{\gamma}{R} Q_\nu^t(p)  \;,
       &  p > 0.5 \\
   \end{cases} \;,
\end{equation}
where $p$, $\gamma$, $\nu$ are the probability, skewness parameter and degrees of freedom, $Q_\nu^t$ is the quantile function of standard $t$ distribution with degrees of freedom $\nu$, $R$ is an normalizer to obtain unit quartile range. The median of the skewed $t$ distribution is zero.

Consider the model 
\begin{equation}
G(p,\bm{\theta}) = \sum_{i=1}^I \theta_i Q_i(p) \;,
\end{equation}
where $\{Q_i(p)\}_{i=1}^I
=
\{
Q^{st}_{\gamma,\nu}(p) |
\gamma \in \{ \frac{1}{2},\frac{1}{1.8},\frac{1}{1.6},\frac{1}{1.4},\frac{1}{1.2},1,1.2,1.4,\cdots,2 \}$, $\nu \in \{ 5,7,9,\cdots, 25\}
\}$. The true parameter $\bm{\theta}$ is set as $(1,1,1,\cdots,1,0.2,0.2,\cdots,0.2)$, i.e., besides intercept $1$, $Q_i(p)$ with $\gamma>1$ have coefficient $1$ while the rest have $0.2$. We fit the model to the $N$ samples generated from the true model with inverse transform sampling. We use the integer part of $(10^2,10^{2.25},10^{2.5},10^{2.75},\cdots,10^4)$ as the sample size $N$. For each sample size, the experiment is conducted for $100$ times to obtain the mean and confidence band.

  In this case study, the weights  of observations are the diagonal elements of the optimal weight matrix \eqref{opt_weight}. Since the density function in  \eqref{opt_weight} is not known apriori in practice, we plug in normal distribution to obtain the weights. The weights are  $w_{n} = \frac{f(Q(p_n))^2}{p_n(1-p_n)}$, where $p_n = \frac{n}{N+1}$, and $f,Q$ are the density  and quantile of a standard normal distribution.  Numerical experiments in \cite{Ergashev} show that the impact of the weights is insignificant  as long as the extreme tail observations are assigned with small weights.

 Figure \ref{convergence_ls} and Figure  \ref{convergence_lad} 
 show that the Wasserstein distance between the estimated quantile function and the true quantile function converges to zero as the sample size $N$ increases. 

\begin{figure}[htbp] 
\centering
\includegraphics[scale=0.65]{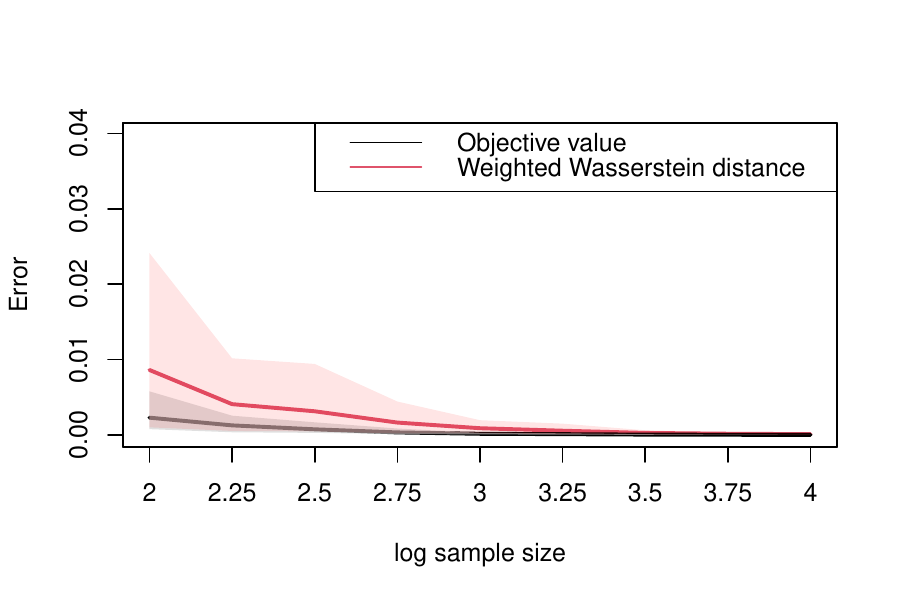}
\caption{
Convergence of error and weighted 2-Wasserstein distance obtained by weighted least squares regression. Lower (black) line
= objective (error) of the optimization problem of weighted least squares regression. Upper (red) line = Wasserstein distance between the
estimated quantile function and the true quantile function.  
Wide (grey) band = $90$\% confidence band of the error  obtained by $100$ repeated experiments. Thin (red) band = $90$\% confidence band of the distance obtained by $100$ repeated experiments. The horizontal axis = sample size in log scale.
}
\label{convergence_ls} 
\includegraphics[scale=0.65]{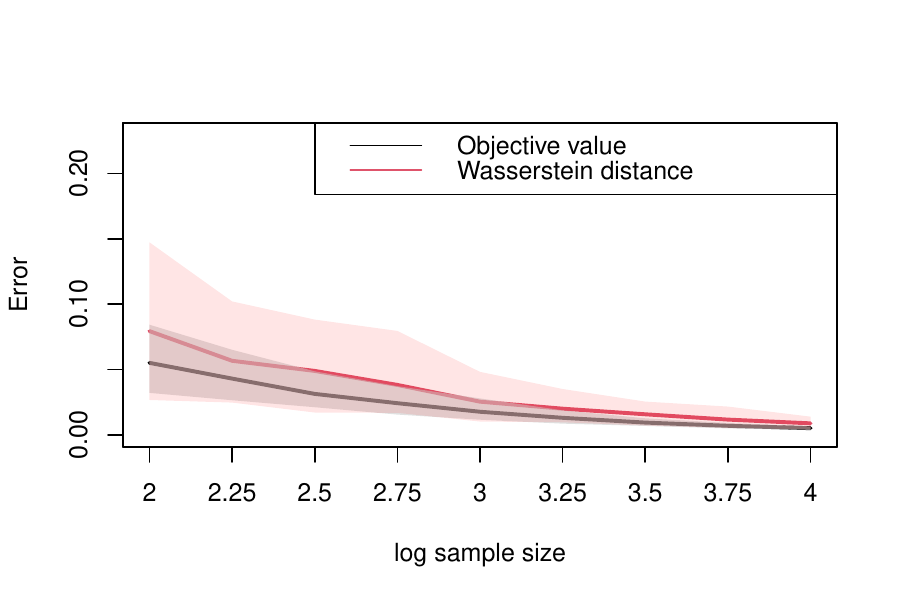}
\caption{
Convergence of error and 1-Wasserstein distance obtained by  least absolute deviation regression. Lower (black) line
= objective (error) of the optimization problem of least absolute deviation regression. Upper (red) line = Wasserstein distance between the
estimated quantile function and the true quantile function.  
Wide (grey) band = $90$\% confidence band of the error  obtained by $100$ repeated experiments. Thin (red) band = $90$\% confidence band of the distance obtained by $100$ repeated experiments. The horizontal axis = sample size in log scale.
}
\label{convergence_lad}
\end{figure}

\end{document}